\newsavebox\myboxA
\newsavebox\myboxB
\newlength\mylenA
\newcommand*\xoverline[2][0.75]{%
  \sbox{\myboxA}{$\m@th#2$}%
  \setbox\myboxB\null
  \ht\myboxB=\ht\myboxA%
  \dp\myboxB=\dp\myboxA%
  \wd\myboxB=#1\wd\myboxA
  \sbox\myboxB{$\m@th\overline{\copy\myboxB}$}
  \setlength\mylenA{\the\wd\myboxA}
  \addtolength\mylenA{-\the\wd\myboxB}%
  \ifdim\wd\myboxB<\wd\myboxA%
    \rlap{\hskip 0.5\mylenA\usebox\myboxB}{\usebox\myboxA}%
  \else
    \hskip -0.5\mylenA\rlap{\usebox\myboxA}{\hskip 0.5\mylenA\usebox\myboxB}%
  \fi}
\newcommand{\mc}[1]{\mathcal{#1}}
\DeclareMathOperator{\interior}{int}
\DeclareMathOperator\Arg{Arg}
\newcommand{\qp}{\alpha+\i\beta}
\newcommand{\qpconj}{\alpha'+\i\beta'}
\newcommand{\associateqp}{e^{-\i\spatialperiod(\alpha+\i\beta)}}
\DeclareFontShape{T1}{lmr}{b}{sc}{<->ssub*cmr/bx/sc}{}
\DeclareFontShape{T1}{lmr}{bx}{sc}{<->ssub*cmr/bx/sc}{}
\newcommand{\abs}[1]{\left\lvert#1\right\rvert}
\newcommandx{\unsure}[2][1=]{\todo[linecolor=red,backgroundcolor=red!25,bordercolor=red,#1]{#2}}
\newcommandx{\change}[2][1=]{\todo[linecolor=blue,backgroundcolor=blue!25,bordercolor=blue,#1]{#2}}
\newcommandx{\info}[2][1=]{\todo[linecolor=green!30!black,backgroundcolor=green!50,bordercolor=green!30!black,#1]{#2}}
\newcommandx{\improvement}[2][1=]{\todo[linecolor=black,backgroundcolor=black!25,bordercolor=black,#1]{#2}}
\newcommandx{\thiswillnotshow}[2][1=]{\todo[disable,#1]{#2}}
\crefname{proposition}{Proposition}{Propositions}
\crefname{equation}{}{}
\newtheorem{theorem}{Theorem}[section]
\newtheorem{lemma}[theorem]{Lemma}
\newtheorem{proposition}[theorem]{Proposition}
\theoremstyle{definition}
\newtheorem{definition}[theorem]{Definition}
\newtheorem{remark}[theorem]{Remark}
\crefname{assumption}{Assumption}{Assumptions}
\crefname{definition}{Definition}{Definitions}
\crefname{corollary}{Corollary}{Corollaries}
\crefname{enumi}{item}{items}
\DeclareMathOperator{\N}{\mathbb{N}}
\DeclareMathOperator{\Z}{\mathbb{Z}}
\DeclareMathOperator{\R}{\mathbb{R}}
\DeclareMathOperator{\C}{\mathbb{C}}
\renewcommand{\i}{\mathbf{i}}
\renewcommand{\bar}[1]{\overline{#1}}
\renewcommand{\epsilon}{\varepsilon}
\DeclareMathOperator{\dd}{d\!}
\renewcommand{\i}{\mathbf{i}}
\DeclareMathOperator*{\spatialperiod}{\mathsf{L}}
\title{Generalised Brillouin Zone for Non-Reciprocal Systems}
\begin{document}

 \author[H. Ammari]{Habib Ammari \orcidlink{0000-0001-7278-4877}}
\address{\parbox{\linewidth}{Habib Ammari\\
 ETH Z\"urich, Department of Mathematics, Rämistrasse 101, 8092 Z\"urich, Switzerland, \href{http://orcid.org/0000-0001-7278-4877}{orcid.org/0000-0001-7278-4877}}}
\email{habib.ammari@math.ethz.ch}
\thanks{}

\author[S. Barandun]{Silvio Barandun \orcidlink{0000-0003-1499-4352}}
 \address{\parbox{\linewidth}{Silvio Barandun\\
 ETH Z\"urich, Department of Mathematics, Rämistrasse 101, 8092 Z\"urich, Switzerland, \href{http://orcid.org/0000-0003-1499-4352}{orcid.org/0000-0003-1499-4352}}}
 \email{silvio.barandun@sam.math.ethz.ch}

\author[P. Liu]{Ping Liu \orcidlink{0000-0002-7857-7040}}
 \address{\parbox{\linewidth}{Ping Liu\\
  School of Mathematical Sciences, Zhejiang University, Zhejiang 310058, China, \href{http://orcid.org/0000-0002-7857-7040}{orcid.org/0000-0002-7857-7040}}}
\email{pingliu@zju.edu.cn}

\author[A. Uhlmann]{Alexander Uhlmann \orcidlink{0009-0002-0426-6407}}
 \address{\parbox{\linewidth}{Ping Liu\\
 ETH Z\"urich, Department of Mathematics, Rämistrasse 101, 8092 Z\"urich, Switzerland, \href{http://orcid.org/0009-0002-0426-6407}{orcid.org/0009-0002-0426-6407}}}
\email{alexander.uhlmann@sam.math.ethz.ch}

\begin{abstract} {Recently, it has been observed that the Floquet-Bloch transform with real quasiperiodicities fails to capture the spectral properties of non-reciprocal systems. The aim of this paper is to introduce the notion of a generalised Brillouin zone by allowing the quasiperiodicities to be complex in order to rectify this. It is proved that this shift of the Brillouin zone into the complex plane accounts for the unidirectional spatial decay of the eigenmodes and leads to correct spectral convergence properties. The results in this paper clarify and prove rigorously how the spectral properties of a finite structure are associated with those of the corresponding semi-infinitely or infinitely periodic lattices and give explicit characterisations of how to extend the Hermitian theory to non-reciprocal settings. Based on our theory, we characterise the generalised Brillouin zone for both open boundary conditions and periodic boundary conditions. Our results are consistent with the physical literature and give explicit generalisations to the $k$-Toeplitz matrix cases.
}
\end{abstract}
\maketitle
\bigskip

\noindent \textbf{Keywords.}   Generalised Brillouin zone, non-reciprocal systems,  non-Hermitian skin effect,  Toeplitz matrices and operators, Laurent operators, spectral convergence.

\bigskip

\noindent \textbf{AMS Subject classifications.}
35B34,47B28, 35P25, 35C20, 81Q12, 15A18, 15B05.

\section{Introduction}

In spatially periodic Hermitian systems, such as subwavelength resonator systems in classical wave physics or electronic systems in condensed matter physics, the band structure of the spectrum of the underlying periodic differential operator is described by the band theory in terms of the Floquet-Bloch wave functions.  The frequency or energy spectrum is computed over the Brillouin zone (the set of quasiperiodicities which are real) and consists in general of bands separated by gaps. 

A fundamental question is to consider what happens when the number of subwavelength resonators or atoms gradually increases generating an infinitely periodic lattice (to form a chain, a screen or a crystal). In electronic structures, it is known that the addition of every new atom adds one more energy level, and, in the limit when the number of atoms goes to infinity, we get continuous bands.
A similar result holds in subwavelength wave physics. In \cite{ammariSpectralConvergenceLarge2023}, it is shown that the subwavelength resonant frequencies of a system of coupled resonators in a truncated periodic lattice converge to the essential spectrum of the corresponding infinite lattice. Moreover, the (discrete) density of states for the finite system converges in distribution to the (continuous) density of states of the infinite one. This is achieved by proving a weak convergence of the finite capacitance matrix (which provides a discrete approximation of the spectrum of the differential operator) to the corresponding (translationally invariant) Toeplitz matrix of the infinite structure.

In this paper, we consider non-reciprocal (non-Hermitian) systems such as systems of subwavelength resonators or electronic systems both with imaginary gauge potentials. 
Non-Hermitian systems have been observed to display a variety of scattering and resonance behaviours that are not possible in Hermitian systems 
\cite{el-ganainy.makris.ea2018Nonhermitian,ashida.gong.ea2020NonHermitian}. There are three widely studied classes of non-Hermitian systems: (i)  parity-time symmetric systems where the material parameters take complex values, (ii) systems where an imaginary gauge potential (in the form of a first-order directional derivative) is added to break Hermiticity, and (iii) time-modulated systems where the material parameters depend periodically in time.  While the Green functions associated with systems in class (i) are symmetric with respect to their spatial variables (source and receiver points), systems in classes (ii) and (iii) may possess non-symmetric Green's functions \cite{ammari.davies.ea2021Functional}. This manifests itself by non-reciprocal wave propagation or non-symmetric band structures. Here, we focus our attention on systems in class (ii) and show that the Floquet-Bloch transform with real quasiperiodicities fails to capture the spectral properties of non-reciprocal systems. In order to rectify this, we introduce the notion of generalised Brillouin zone by allowing the quasiperiodicities to be complex. We prove that this generalisation of the Brillouin zone into the complex plane accounts for the unidirectional spatial decay of the eigenvectors.  
We refer the reader to \cite{borisov.fedotov2022Bloch, PhysRevLett.125.226402,PhysRevLett.121.086803,PhysRevLett.123.246801,10.1093/ptep/ptaa100,MR4719028,okuma.sato2023Nonhermitian,okumaNonHermitianTopologicalPhenomena2023} for some earlier formal results obtained by the physics community in this direction mostly on one-dimensional non-Hermitian tight-binding models described by tridiagonal Toeplitz matrices.

In this work, we consider the more general setting of polymer systems and study three subclasses of such systems in class (ii): finite, semi-infinite, and infinite. These physical systems are respectively modelled by tridiagonal $k$-Toeplitz matrices, tridiagonal $k$-Toeplitz operators, and tridiagonal $k$-Laurent operators. It is worth emphasising that Hermitian systems are insensitive to boundary conditions, causing the semi-infinite and infinite spectrum to coincide and the finite system to converge to that limit, both under open or periodic boundary conditions. However, in the non-reciprocal setting, a phenomenon known as the \enquote{non-Hermitian skin effect} occurs in the presence of imaginary gauge potentials and yields exponentially localised modes \cite{ammari.barandun.ea2024Mathematical}. This significantly enriches the behaviour of the underlying differential operators and causes their spectra to diverge. Classical Floquet-Bloch theory cannot capture the localised modes and exponentially converging pseudoeigenvalues break spectral convergence, as the spectral limit of the finite Toeplitz matrix is no longer the Toeplitz operator.

The main idea to rectify these issues is to extend the classical Brillouin zone into the complex plane to model non-reciprocity. Due to the non-reciprocal sensitivity to boundary conditions, the appropriate generalisation will depend on the boundary conditions and the limit of interest. With this approach, we find explicit characterisations of the generalised Brillouin zones.

\begin{description}
    \item[\cref{thm: GFBT}] shows that for the infinite Toeplitz operator case, the appropriate generalisation of the Brillouin zone is of higher dimension to capture the range of allowable decay rates, while \cref{prop:eigenvectors} allows for the construction of the corresponding eigenvectors;
    \item[\cref{thm:obcconv,thm:pbcconv}] show that spectral convergence in non-reciprocal finite systems can be restored using an appropriate shift of the Brillouin zone into the complex plane.
\end{description}
Our results agree with \cite{okumaNonHermitianTopologicalPhenomena2023} and apply to polymer systems characterised by tridiagonal $k$-Toeplitz matrices. It is expected that these results will generalise to time-modulated systems where non-Hermitian skin effects arise as studied in \cite{matsushima2024}. It is worth emphasising that in the absence of such skin effects, the use of the standard Brillouin zone leads to the spectral properties of large but finite time-modulated systems by approximating them by the corresponding infinite periodic systems. Furthermore, the presented theory of a generalised Brillouin zone coincides with the standard one when applied to reciprocal systems, making it a legitimate generalisation. 

The paper is organised as follows. Section \ref{sec:sota} introduces the problem formulation, providing background on Toeplitz theory, an overview of Floquet-Bloch theory, spectral convergence in the Hermitian setting, and a simple non-reciprocal model to illuminate the issues of the Hermitian theory in this setting.
Section \ref{sec:topgbz} focuses on restoring Floquet-Bloch theory for the infinite Toeplitz operators by introducing a generalised Brillouin zone. We introduce the concept of the non-reciprocity rate and prove our main theorem on the spectral decomposition of Toeplitz operators using the generalised Brillouin zone.
Section \ref{sec:3limits} examines three spectral limits which no longer coincide in non-reciprocal settings: the open limit (corresponding to open boundary conditions), the periodic limit (corresponding to periodic boundary conditions), and the pseudospectral limit. We show how these limits differ and relate to each other and characterise the appropriate generalised Brillouin zone for each of them. 

\section{Problem formulation}\label{sec:sota} 
\subsection{Toeplitz theory}
In this work, we consider tridiagonal $k$-Toeplitz matrices and operators.
\begin{definition}[$k$-Toeplitz operators and $k$-Laurent operators]\label{def:kToeplitzLaurent}
    A \emph{$k$-Toeplitz} operator is an infinite matrix of the form
    \begin{align}
    A = \begin{pmatrix}
        A_0 & A_{-1} & A_{-2} & \cdots\\
        A_1 & A_0 & A_{-1} & \cdots\\
        A_2 & A_1 & A_0 & \cdots\\
        \vdots & \ddots & \ddots & \ddots
    \end{pmatrix}
    \label{eq: ktoeplitz op}
    \end{align}
    for a sequence $(A_j)_{j\in\Z}\subset \C^{k\times k}$ of $k\times k$ matrices. We may consider this as an operator on $\ell^2(\C)$. Similarly, a $k$-Laurent operator is an operator of the form \begin{align}
    A = \begin{pmatrix}
        \ddots & \ddots & \ddots & \vdots & \cdots\\
         \cdots & A_1 & A_0 & A_{-1} & \cdots\\
        \cdots &A_2 & A_1 & A_0 & \cdots\\
        \vdots & \vdots & \ddots & \ddots & \ddots
    \end{pmatrix}.
    \label{eq: klaurent op}
    \end{align}
\end{definition}
\begin{definition}[Symbol of a $k$-Toeplitz operator]
    Let $A$ be a $k$-Toeplitz (or Laurent) operator as in \cref{def:kToeplitzLaurent}. Then the unique function $a\in L^\infty(S^1,\C^{k\times k})$ such that
    \begin{align}
    A_k = \frac{1}{2\pi}\int_0^{2\pi}a(e^{\i \theta})e^{-\i k\theta}\dd \theta
        \label{eq: symbol of k-toeplitz}
    \end{align}
    is called the \emph{symbol} of $A$ and we write $A=T(a)$ or $A=L(a)$ in the case of Laurent operators.
\end{definition}
In this work, we will only consider continuous symbols. This is enough for our considerations, but generalisations are possible.

\begin{definition}[$k$-Toeplitz matrix]
    For $n\geq 1$, we define the projections
    \begin{align}
        P_n:\ell^2(\N,\C)&\to\ell^2(\N,\C)\nonumber\\
        (x_1,x_2,\dots)&\mapsto(x_1,\dots x_n,0,0,\dots).
        \label{eq:  definiton of Pn}
    \end{align}
    The \emph{$k$-Toeplitz matrix} of order $mk$ for $m\in\N$ associated to the symbol $a\in L^\infty(S^1,\C^{k\times k})$ is given by
    $$
    T_{m\times k}(a)\coloneqq P_{mk}T(a)P_{mk},
    $$
    and can be identified as an $mk\times mk$ matrix.
\end{definition}

A $k$-Toeplitz matrix or operator $M$ which satisfies
$$
M_{i,j} = 0 \quad \text{for}\quad \vert i - j \vert > 1
$$
is said to be \emph{tridiagonal}. It is easy to check that the symbol $a(z)$ of a tridiagonal $k$-Toeplitz operator
\[
T(a) = \begin{pmatrix}
    a_1 & b_1 & &&& \\
            c_1    & a_2 & b_2 & & &    \\
                  & c_2 & \ddots & \ddots &&      \\
                & & \ddots & \ddots& b_{k-1} &       \\
               & & & c_{k-1} & a_{k} & b_{k}      \\
            &  &   &    &   c_{k}     & a_1 & b_1 \\
             &  &   &    &   & \ddots & \ddots& \ddots 
\end{pmatrix}
\]
is of the form (see \cite{ammari.barandun.ea2024Spectra})
\begin{align}
a:z \mapsto
\begin{pmatrix}
    a_1 & b_1 &0  &  \cdots      &  0 &     c_k z \\
            c_1    & a_2 & b_2 & & & 0    \\
            0      & c_2 & \ddots & \ddots &&   \vdots    \\
            \vdots    & & \ddots & \ddots& b_{k-2} & 0      \\
            0     & & & c_{k-2} & a_{k-1} & b_{k-1}      \\
            b_k z^{-1} & 0 &  \cdots  &  0  &   c_{k-1}     & a_k 
\end{pmatrix} \in \C^{k \times k}.
\end{align}

Given an eigenpair $(\lambda, \bm v)$ of a Toeplitz matrix $T_{m\times 1}(a)$, it is well-known \cite{trefethen.embree2005Spectra} that there is a link between the winding of the symbol $a$ around $\lambda$ and the properties of $\bm v$. Specifically, denoting by $I(f,z_0)\coloneqq \frac{1}{2\pi \i}\int_f (\xi - z_0)^{-1} \dd \xi$ the winding number of a function $f:S^1\to\C$,  we know that there exists an $M>0$ such that
\begin{align}\label{eq: exp decay based on symbol}
    \frac{\vert v^{(j)}\vert }{\max_j \vert v^{(j)}\vert } \leq \begin{dcases}
        M^{-j}, & I(a,\lambda) < 0,\\
        M^{j}, & I(a,\lambda) > 0.
    \end{dcases}
\end{align}
This result has been generalised to $k$-Toeplitz matrices in \cite{ammari.barandun.ea2024Spectra}.

We call a symbol \emph{collapsed} if for all $\lambda\in \C$ it holds that $I(\det(a-\lambda),0)=0$. This corresponds to the fact that the curves traced out by the eigenvalues of the symbol $S^1\ni z \mapsto\sigma(a(z))$ do not generate winding regions. Relating this to \eqref{eq: exp decay based on symbol}, there is no exponential behaviour of the eigenvectors of a collapsed symbol. In the tridiagonal setting this is easy to identify.
\begin{proposition}
    Let $T(a)$ be a symmetric or Hermitian tridiagonal Toeplitz operator. Then the symbol $a(z)$ is collapsed.
\end{proposition}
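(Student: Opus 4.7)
The plan is to pass the symmetry of the operator down to a symmetry of the symbol on the unit circle $S^1$, and then to observe that this symmetry forces the curve $z\mapsto\det(a(z)-\lambda)$ to have winding number zero about the origin for every admissible $\lambda$.

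\textbf{Hermitian case.} First I would read off the explicit form of the tridiagonal $k$-Toeplitz symbol displayed above and check that the Hermitian conditions $a_i\in\mathbb{R}$ and $c_i=\overline{b_i}$ imply $a(z)^* = a(z)$ for every $z\in S^1$, using $\overline{z}=z^{-1}$ to handle the two corner entries $c_kz$ and $b_kz^{-1}$. Once $a(z)$ is Hermitian on $S^1$, its eigenvalues $\mu_1(z),\dots,\mu_k(z)$ are real, so
\[
\det(a(z)-\lambda)=\prod_{j=1}^k(\mu_j(z)-\lambda).
\]
For any $\lambda$ outside $\bigcup_{j,z}\{\mu_j(z)\}$, each factor $z\mapsto\mu_j(z)-\lambda$ takes values in the horizontal line $\mathbb{R}-\lambda$, which does not encircle $0$, so each factor has winding zero about $0$. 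Summing windings across the product gives $I(\det(a-\lambda),0)=0$.

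\textbf{Symmetric case.} The key identity here, again obtained by inspection of the symbol matrix, is
\[
a(z)^T = a(z^{-1}) \quad\text{for all } z \in S^1.
\]
The interior tridiagonal blocks transpose into themselves once one applies $b_i=c_i$, while the two corner entries $c_kz$ and $b_kz^{-1}$ exchange and interchange $z\leftrightarrow z^{-1}$. Taking determinants gives $\det(a(z)-\lambda)=\det(a(z^{-1})-\lambda)$, so the function $g(\theta):=\det(a(e^{i\theta})-\lambda)$ satisfies $g(\theta)=g(2\pi-\theta)$. Hence the arc $g|_{[\pi,2\pi]}$ is the reverse of $g|_{[0,\pi]}$ and contributes the opposite signed change in argument, so the total winding around $0$ is zero.

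\textbf{Main obstacle.} The proof is essentially an unpacking of definitions, and the only point requiring care is the bookkeeping for the corner entries $c_kz$ and $b_kz^{-1}$ of the $k\times k$ symbol: one must remember that $z\in S^1$ forces $\overline{z}=z^{-1}$ when verifying $a(z)^*=a(z)$, whereas the identity $a(z)^T=a(z^{-1})$ is purely algebraic in $z$. Beyond this block-level bookkeeping, no further subtleties appear; in particular, the scalar ($k=1$) case is recovered immediately since the symbol $a_1+b_1z^{-1}+c_1z$ is either real-valued on $S^1$ (Hermitian) or satisfies $a(z)=a(z^{-1})$ (symmetric).
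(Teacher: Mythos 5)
Your proof is correct and follows essentially the same route as the paper: both arguments reduce the claim to the symmetries $a(z)^*=a(z)$ (Hermitian case, giving real eigenvalue curves) and $a(z)^\top=a(\bar z)=a(z^{-1})$ on $S^1$ (symmetric case, giving curves that retrace themselves under $\theta\mapsto-\theta$), from which the vanishing of $I(\det(a-\lambda),0)$ follows. The paper states these two identities and stops, whereas you additionally spell out the winding-number bookkeeping; no changes needed.
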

\begin{proof}
    For $z\in S^1$ the result then follows immediately from the fact that $ (a(\overline{z}))^{\top}=a(z)$ implying $\sigma(a(\overline{z}))=\sigma(a(z))$ in the symmetric case and $(a(z))^\dagger = a(z)$ implying $\sigma(a(z))\subset \R$ in the Hermitian case.
\end{proof}

\subsection{Physical systems and their mathematical models}

Various physical systems are modelled through Toeplitz matrices and operators and variations thereof. These include systems of subwavelength resonators in (classical) one-dimensional wave physics \cite{ammari.barandun.ea2023Exponentially, ammari.barandun.ea2024Mathematical} and the tight-binding model with nearest neighbour approximation in condensed matter theory \cite{okuma.sato2023Nonhermitian,osti_377103,osti_377103complex,cpa.21735,thouless_rev}. These models are constituted by resonators or particles all of which we will call sites in this work. We assume that the interactions between the sites repeat periodically with period $k$, so that if the interactions are all the same $k=1$ holds. We denote by $\spatialperiod$ the spatial period of recurrence.

In all of these examples, the following modelling applies:
\begin{description}
    \item[Finite systems] are constituted by a finite number of sites. These are modelled by tridiagonal $k$-Toeplitz matrices;
    \item[Semi-infinite systems] are constituted by an infinite number of sites but only in one direction from a fixed origin. These are modelled by tridiagonal $k$-Toeplitz operators;
    \item[Infinite systems] are constituted by an infinite number of sites where no point is a privileged choice of origin. These are modelled by tridiagonal $k$-Laurent operators.
\end{description}
In the literature, these three cases are also known as open boundary conditions, semi-infinite boundary conditions, and periodic boundary conditions \cite{okuma.sato2023Nonhermitian}.

\subsection{Floquet-Bloch theory in the Hermitian case}

Floquet--Bloch theory is the proper tool to analyse periodic systems in the Hermitian case, especially because of the Floquet theorem relating the spectra of the infinite operator to the spectra of the single bands. Here, the Hermiticity of the system is reflected in the Hermiticiy of the matrices and operators, \emph{i.e.}, $M=M^*\coloneqq\bar{M}^\top$, where the superscript $\top$ denotes the transpose.

One may quickly notice that when studying a tridiagonal system associated to the Laurent operator $L(a)$ and denoting by $\alpha$ the quasiperiodicity, the operator associated to the Floquet-transformed system is simply given by the symbol 
$a(e^{-\i\alpha \spatialperiod})$. Using \cite[Theorem 2.8]{ammari.barandun.ea2024Spectra}, we can find that
\begin{align}\label{eq:ClassicalFBT}
    \sigma(L(a)) = \bigcup_{\alpha \in Y^*} \sigma(a(e^{-\i\alpha \spatialperiod})),
\end{align}
where $Y^*\coloneqq [-\pi/\spatialperiod,\pi/\spatialperiod)$ is the first \emph{Brillouin zone}. This exactly mirrors the Floquet-Bloch decomposition of the spectrum for periodic self-adjoint elliptic operators. 

Combining the Bauer–Fike theorem together with \cite[Corollary 6.16]{bottcher.silbermann1999Introduction} shows that for these Hermitian systems, the spectrum of the finite system converges to the spectrum of the infinite one, meaning that 
\begin{align}\label{eq:ClassicalConv}
    \sigma(T_{mk}(a))\xrightarrow{m\to\infty}\sigma (L(a))
\end{align}
in the Hausdorff sense. On the other side, the Hermiticity of the symbol immediately implies that
\begin{align*}
    \sigma(T(a))=\sigma(L(a)).
\end{align*}

\subsection{The non-reciprocal case}
We now shift our focus to the case of non-reciprocal systems, that is the case where the matrices and operators we are working with are no longer Hermitian. Non-reciprocal systems are peculiar for having eigenmodes which are condensed on one edge of the system \cite{PhysRevLett.121.086803,ammari.barandun.ea2024Mathematical} and therefore present a privileged choice of origin, making a semi-infinite system the natural corresponding physical structure.

We consider the following prototypical example:
\begin{align}\label{eq: example symbol}
    a(z)=\begin{pmatrix}
        0 & -2 + -\frac{z}{10} \\ -\frac{9}{10} + \frac{1}{z} & 0
    \end{pmatrix},
\end{align}
and look at the spectra plotted in \cref{fig:different spectra}.
\begin{figure}[h]
    \centering
    \includegraphics[width=0.8\linewidth]{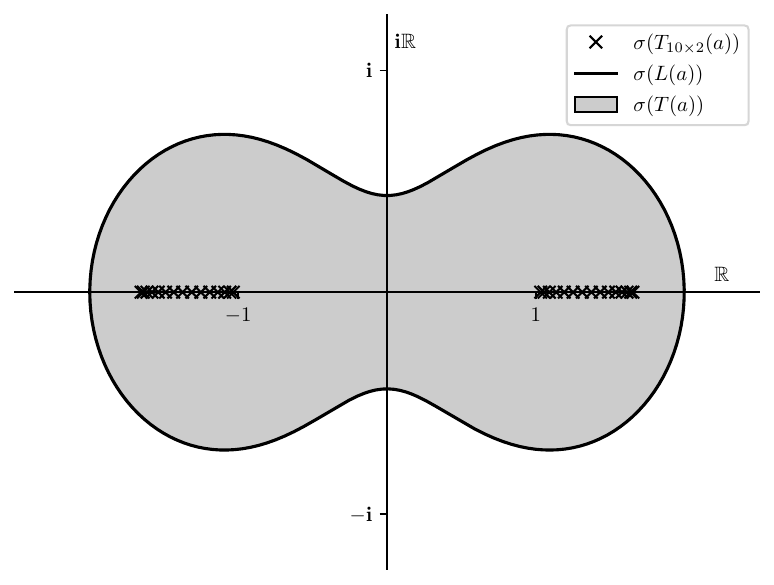}
    \caption{Spectra of the different mathematical objects related to the symbol $a$ from \eqref{eq: example symbol}.}
    \label{fig:different spectra}
\end{figure}

We observe the following:
\begin{enumerate}
\item[(i)] The symbol $a(z)$ is no longer collapsed and now has nonempty interior. This will prove to be the crucial difference between the non-reciprocal and the reciprocal cases, as in the non-reciprocal setting the spectra $\sigma (L(a))$ and $\sigma (T(a))$ do not agree anymore.
\item[(ii)] $\sigma(L(a)) = \bigcup_{\alpha \in Y^*} \sigma(a(e^{-\i\alpha \spatialperiod}))$ still holds also in the non-reciprocal case. Nevertheless, non-reciprocal systems have a privileged choice of origin as they present exponential decay in their eigenmodes. One would wish that the Floquet-Bloch decomposition could model this decay and would cover $\sigma(T(a))$ and not only $\sigma(L(a))$.
\item[(iii)] The convergence $\sigma(T_{mk}(a))\xrightarrow{m\to\infty}\sigma (L(a))$ does not hold anymore and neither does $\sigma(T_{mk}(a))\xrightarrow{m\to\infty}\sigma (T(a))$. The spectrum of $T_{mk}(a)$ is purely real while the ones of $L(a)$ and of $T(a)$ have non-trivial imaginary parts.
\end{enumerate}

In the following sections, we will address the issues (ii) and (iii) above and resolve both of them.

\section{Toeplitz operator and generalised Brillouin zone} \label{sec:topgbz}
As seen in \cref{sec:sota}, the classical Floquet-Bloch transform with real quasiperiodicities $\alpha \in Y^*$ fails to capture non-reciprocal decay and only covers the Laurent operator $L(a)\subsetneq T(a)$ (identity (\ref{eq:ClassicalFBT}) still holds). 
In order to rectify this, we extend the allowable quasiperiodicities into the complex plane. This is a natural extension. Indeed, considering the quasiperiodicity condition
$$
u(x + \spatialperiod) = e^{\i \spatialperiod \alpha} u(x),
$$
one immediately notices that decaying functions (as are the eigenmodes of non-reciprocal systems) cannot be described through this relation for $\alpha\in\R$. This would, however, be the case if we allow $\alpha \in \C$.
\begin{definition}
    For a tridiagonal $k$-Toeplitz operator with symbol
    \begin{align} \label{eq: k-Toeplitz base symbol}
        a:z\mapsto \begin{pmatrix}
    a_1 & b_1 &0  &  \cdots      &  0 &     c_k z \\
            c_1    & a_2 & b_2 & & & 0    \\
            0      & c_2 & \ddots & \ddots &&   \vdots    \\
            \vdots    & & \ddots & \ddots& b_{k-2} & 0      \\
            0     & & & c_{k-2} & a_{k-1} & b_{k-1}      \\
            b_k z^{-1} & 0 &  \cdots  &  0  &   c_{k-1}     & a_k 
\end{pmatrix},
    \end{align}
    with non-zero off-diagonal entries we define the \emph{non-reciprocity rate} as 
    \begin{equation}\label{eq:deltadef}
        \Delta = \ln \prod_{j=1}^k \abs{\frac{b_j}{c_j}}.
    \end{equation}
    Furthermore, we define the \emph{generalised Brillouin zone} to be
    \begin{align}
        \mc B = \bigg\{ \qp \mid \alpha\in [-\pi/\spatialperiod,\pi/\spatialperiod), \beta \in [0,\Delta/\spatialperiod] \bigg\}, 
        \label{eq: GBZ}
    \end{align}
    where $\spatialperiod$ denotes the physical length of the unit cell. 
\end{definition}
We aim at showing that this expansion of the Brillouin zone allows us to reinstate the Floquet-Bloch theorem in a physical sense, which we encompass in the following theorem. 
\begin{theorem}\label{thm: GFBT}
    Consider a tridiagonal $k$-Toeplitz operator with symbol $a$ as in \eqref{eq: k-Toeplitz base symbol} and with non-zero off-diagonal entries and let $\mc B$ be the generalised Brillouin zone from \eqref{eq: GBZ}. Then,
    \begin{align}\label{equ:floquetidentitynonhermitian}
        \sigma(T(a)) = \bigcup_{\qp \in \mc B} \sigma(a(\associateqp)),
    \end{align}
    up to at most $(k-1)$ points which may be in $\sigma(T(a))$ but not in $\bigcup_{\qp \in \mc B} \sigma(a(\associateqp))$. Furthermore, for every $\lambda \in \sigma(T(a))$, the Brillouin zone $\mc B$ contains exactly two corresponding quasiperiodicities 
    \begin{align*}
        \alpha+\i\beta &\in [-\pi/\spatialperiod,\pi/\spatialperiod)+\i[0,\Delta/(2\spatialperiod)] \text{  and }\\ \alpha'+\i\beta' =  (-\zeta/\spatialperiod-\alpha) + \i(\Delta/\spatialperiod-\beta) &\in [-\pi/\spatialperiod,\pi/\spatialperiod)+\i[\Delta/(2\spatialperiod),\Delta/\spatialperiod]
    \end{align*}
    such that $\lambda\in \sigma(a(e^{-\i\spatialperiod(\qp)}) = \sigma(a(e^{-\i\spatialperiod(\alpha'+\i\beta')}))$. Here, $\zeta$ denotes the shift $\zeta \coloneqq \Arg (\prod_{i=1}^k \frac{b_i}{c_i})$. 
\end{theorem}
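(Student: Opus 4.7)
The plan is to reduce the matrix-valued spectral equation $\det(a(z) - \lambda I) = 0$ to a scalar quadratic in $z$, use the block Toeplitz (Widom--Gohberg) theory to characterise $\sigma(T(a))$ in terms of where the two roots of this quadratic lie, and then translate the answer back to the generalised Brillouin zone via the exponential map $\qp \mapsto \associateqp$. The starting observation is that in the cyclic tridiagonal symbol only the $(1,k)$ and $(k,1)$ entries depend on $z$; cofactor expansion along the first row, using that the minors accompanying $c_k z$ and $b_k z^{-1}$ are triangular, gives
\[
\det(a(z) - \lambda I) \; = \; P(\lambda) + (-1)^{k+1}\bigl( C z + B z^{-1} \bigr),
\]
where $C = \prod_{j=1}^k c_j$, $B = \prod_{j=1}^k b_j$, and $P(\lambda)$ is the degree-$k$ characteristic polynomial of the tridiagonal block with the corner entries removed. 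Multiplying through by $z$ produces a genuine quadratic in $z$ with nonzero leading coefficient $C$, hence two roots $z_1(\lambda), z_2(\lambda) \in \C \setminus \{0\}$ obeying $z_1 z_2 = B/C$; in particular $|z_1||z_2| = e^{\Delta}$ and $\arg z_1 + \arg z_2 \equiv \zeta \pmod{2\pi}$.

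Next I would invoke Widom's theorem for block Toeplitz operators with continuous symbol: $T(a) - \lambda$ is Fredholm iff $\det(a(z) - \lambda I) \neq 0$ on $|z| = 1$, and its Fredholm index equals minus the winding number of $\det(a(\cdot) - \lambda I)$ about $0$. Writing this determinant as $z^{-1} Q(z)$ with $Q$ of degree two, the winding number equals $\#\{i : |z_i| < 1\} - 1$, which is nonzero precisely when both $z_i$ lie outside the unit disc. Because $|z_1||z_2| = e^{\Delta} \geq 1$, having one root in the closed annulus $\{1 \leq |z| \leq e^{\Delta}\}$ is equivalent to having both in it, so
\[
\bigcup_{\qp \in \mc B} \sigma\bigl(a(\associateqp)\bigr) \; = \; \bigl\{ \lambda \in \C : z_1(\lambda), z_2(\lambda) \in \{1 \leq |z| \leq e^{\Delta}\} \bigr\} \; = \; \sigma_{\mathrm{ess}}(T(a)).
\]
Any remaining $\lambda \in \sigma(T(a))$ is then an isolated, Fredholm-index-zero eigenvalue; a standard block Toeplitz argument (a $k$-dimensional analogue of Coburn's lemma, using the partial indices of the $k \times k$ Wiener--Hopf factorisation) bounds the number of such points by $k - 1$.

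For the second assertion, observe that $\qp \mapsto \associateqp$ is a bijection from $\mc B$ onto the closed annulus above, with $\beta = \log |z| / \spatialperiod$ and $\alpha = -\arg z / \spatialperiod$. Pulling back the two roots $z_1, z_2$ produces two quasiperiodicities in $\mc B$, and the product relation $z_1 z_2 = B/C$ translates precisely into
\[
\beta + \beta' \; = \; \Delta/\spatialperiod, \qquad \alpha + \alpha' \; \equiv \; -\zeta/\spatialperiod \pmod{2\pi/\spatialperiod},
\]
which, after ordering by $\beta$, places one quasiperiodicity in the lower strip $\beta \in [0, \Delta/(2\spatialperiod)]$ and the other in the upper strip $\beta' \in [\Delta/(2\spatialperiod), \Delta/\spatialperiod]$, matching the stated split. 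The main obstacle is the block Toeplitz correction: for scalar symbols Coburn's lemma yields exact agreement, but $k \times k$ symbols can support isolated index-zero eigenvalues that the winding argument misses, and pinning down the sharp bound of $k-1$ requires a careful partial-index or Wiener--Hopf factorisation argument that exploits the particular tridiagonal structure of $a$.
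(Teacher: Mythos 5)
Your argument is sound and, once unwound, is essentially the proof the paper gives, organised in the $z$-plane rather than the $\lambda$-plane. Your determinant identity $\det(a(z)-\lambda)=P(\lambda)+(-1)^{k+1}(Cz+Bz^{-1})$ is exactly \cref{lem:detform} (one small caveat: for $k\ge 3$ the permutations using \emph{both} corner entries contribute a $z$-independent but $\lambda$-dependent cross term, so $P(\lambda)$ is not literally the characteristic polynomial of the corner-stripped tridiagonal block; this is immaterial since only the $z$-dependence matters). Where the paper studies the filled ellipse $E$, i.e.\ the image of the closed annulus $\{1\le\lvert z\rvert\le e^{\Delta}\}$ under $\psi$, together with its nested level curves $E_\beta$ (\cref{lem:ellipseprops}), you instead track the two roots $z_1(\lambda),z_2(\lambda)$ of the quadratic and use Vieta's relation $z_1z_2=B/C$; the two viewpoints are equivalent, and yours delivers the conjugate-quasiperiodicity relations $\beta+\beta'=\Delta/\spatialperiod$ and $\alpha+\alpha'\equiv-\zeta/\spatialperiod$ rather more transparently than the paper's explicit manipulation of $\tilde\psi$. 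Your Widom index computation correctly identifies $\bigcup_{\qp\in\mc B}\sigma(a(\associateqp))$ with $\sigma_{\det}(a)\cup\sigma_{\mathrm{wind}}(a)$; just note that ``index nonzero iff both roots lie outside the unit disc'' requires the observation $\lvert z_1\rvert\lvert z_2\rvert=e^{\Delta}\ge 1$ to exclude the ``both inside'' case (which you do supply, a sentence later), and that your ``$\sigma_{\mathrm{ess}}$'' must be read as the Weyl essential spectrum (failure to be Fredholm of index zero), not the Calkin essential spectrum, which here would only be $\sigma_{\det}(a)$.

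The one step you do not actually carry out is the bound of $k-1$ on the exceptional index-zero eigenvalues, which you rightly flag as the main obstacle. To be fair, the paper does not prove this from scratch either: it imports \cref{prop: char of spectrum k toeplitz} from prior work, where the exceptional set is explicitly $\sigma(B_0)$ for a concrete matrix $B_0\in\C^{(k-1)\times(k-1)}$ built from the factorisation data of the tridiagonal symbol. So your instinct --- a block analogue of Coburn's lemma via the partial indices of a Wiener--Hopf factorisation --- points at exactly the right mechanism, but as written it is a pointer rather than a proof; to close the argument you should either carry out that factorisation for this specific symbol or cite the structural result. Everything else, including the pull-back of $z_1,z_2$ to the two quasiperiodicities in the lower and upper halves of $\mc B$, is correct.
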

\begin{remark}\hfill
\begin{itemize}
    \item From now on, we will take $\spatialperiod=1$ without any loss of generality. To reintroduce $\spatialperiod$, we simply rescale $\mc B$ by $1/\spatialperiod$ and all occurrences of $\qp$ in the formulas by $\spatialperiod$.
    \item We will also use $\qp$ and $z=e^{-\i(\qp)}$ interchangeably and refer to them as \emph{associated}.
    \item For a given quasiperiodicity $\qp\in \mc B$ with $\beta\in[0,\Delta/(2\spatialperiod)]$, we call $\alpha'+\i\beta'$ with $\alpha' = \zeta/\spatialperiod-\alpha$ and $\beta' = \Delta/\spatialperiod-\beta$ the \emph{conjugate quasiperiodicities}.
    \item In the above definition, we have assumed that $\Delta>0$. However, this is not necessarily the case. If $\Delta<0$, then the eigenmodes\footnote{While these eigenmodes $\bm u$ are eigenmodes in the symbolic sense $(T(a)-\lambda I) \bm u = 0$, they no longer lie in $\ell^2$ due to their exponential growth.} of $T(a)$ will turn out to be exponentially growing and we observe a \emph{negative} decay parameter $\beta$. We can then let $\beta \in [\Delta/\spatialperiod,0]$ and all of the arguments below work analogously. 
    \item As $\alpha\mapsto e^{-\i(\alpha+\i\beta)}$ is periodic in $\alpha$ with period $2\pi$ we consider equality with respect to $\alpha$ modulo $2\pi$ and choose the convention $\alpha\in [-\pi, \pi)$. 
    \item For reciprocal systems, \emph{i.e.}, for $\Delta=0$,  the generalised Brillouin zone reduces to the standard Brillouin zone, effectively making $\mc B$ an extension of $Y^*$.
    \end{itemize}
\end{remark}

To prove \cref{thm: GFBT}, we will need some intermediate results from \cite{ammari.barandun.ea2024Spectra} providing insights into the spectrum of $T(a)$.
\begin{proposition}\label{prop: char of spectrum k toeplitz}
    Consider a symbol $a(z)$ as in \eqref{eq: k-Toeplitz base symbol}. Then,
    \begin{equation*}
    \sigma_{\det}(a) \cup \sigma_{\mathrm{wind}}(a) \subseteq \sigma(T(a)) \subseteq \sigma_{\det}(a) \cup \sigma_{\mathrm{wind}}(a) \cup \sigma(B_0),
    \end{equation*}
    where 
    \begin{equation*}
        \sigma_{\det}(a) \coloneqq \left\{ \lambda \in \mathbb{C} : \det(a(z) - \lambda) = 0, \ \exists z \in S^1\right\},
    \end{equation*} 
    \begin{equation*}
        \sigma_{\mathrm{wind}}(a) \coloneqq \left\{ \lambda \in \mathbb{C} \setminus \sigma_{\det}(a) : I(\det(a(S^1)) - \lambda, 0) \neq 0 \right\},
    \end{equation*} 
    and $B_0 \in \C^{k-1\times k-1}$.
\end{proposition}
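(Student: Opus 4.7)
The plan is to exploit the fact that the symbol $a$ is tridiagonal after flattening the block structure, and to combine a transfer-matrix analysis of the resulting three-term recurrence with the block-Toeplitz version of Gohberg's theorem. Viewing $T(a)$ as a scalar operator on $\ell^2(\mathbb{N},\mathbb{C})$, the eigenvalue equation $(T(a)-\lambda I)u=0$ becomes the period-$k$ recurrence
\begin{equation*}
    c_{j}\,u_{n-1}+(a_{j}-\lambda)\,u_{n}+b_{j}\,u_{n+1}=0,\qquad n\geq 1,
\end{equation*}
with $j\equiv n\pmod k$ and boundary condition $u_0=0$. Let $M(\lambda)\in \mathrm{GL}_2(\mathbb{C})$ be the one-period transfer matrix; its two eigenvalues $z_\pm(\lambda)$ are precisely the roots in $z$ of the quadratic $\det(a(z)-\lambda I_k)=0$.

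For the first inclusion, if $\lambda\in\sigma_{\det}(a)$ then $M(\lambda)$ has an eigenvalue on $S^1$, producing a bounded Floquet solution of the recurrence that can be truncated and normalised into a Weyl sequence, placing $\lambda$ in the approximate point spectrum of $T(a)$. If $\lambda\in\sigma_{\mathrm{wind}}(a)$, then $a(\cdot)-\lambda I_k$ is pointwise invertible on $S^1$, so by the block-Toeplitz theorem of Gohberg the operator $T(a)-\lambda I$ is Fredholm with index $-I(\det(a(\cdot)-\lambda),0)\neq 0$, hence not invertible.

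For the second inclusion, fix $\lambda\notin\sigma_{\det}(a)\cup\sigma_{\mathrm{wind}}(a)$: then $M(\lambda)$ has exactly one eigenvalue strictly inside and one strictly outside the unit disk, so every $\ell^2$-solution of the homogeneous recurrence must, past the first block, be a scalar multiple of the decaying Floquet mode. Matching this decaying mode against the boundary condition $u_0=0$ at the first block reduces the question of whether a non-trivial $\ell^2$-solution exists to the vanishing of a determinant that, after linearising the $k$-step transfer-matrix equation, can be written as the characteristic polynomial of a fixed matrix $B_0\in\mathbb{C}^{(k-1)\times(k-1)}$ extracted from the leftmost $k$ rows of $T(a)$. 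Since $T(a)-\lambda I$ is Fredholm of index zero off $\sigma_{\det}\cup\sigma_{\mathrm{wind}}$, non-invertibility there is equivalent to a non-trivial kernel, i.e.\ to $\lambda\in\sigma(B_0)$.

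The main obstacle is precisely this linearisation step: verifying that the pencil coming from the boundary matching can be re-cast as an affine pencil $B_0-\lambda I_{k-1}$ for a single $\lambda$-independent $(k-1)\times(k-1)$ matrix, rather than a higher-degree matrix polynomial in $\lambda$ (which would give too many extra eigenvalues). This relies on the $k$-periodic tridiagonal structure combining favourably with the polynomial dependence of $M(\lambda)$ on $\lambda$, and is where the explicit form of $B_0$ must be pinned down. Once this linearisation is in hand, both inclusions close by standard Fredholm arguments.
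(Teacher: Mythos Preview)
The paper does not actually prove this proposition: it is imported verbatim from \cite{ammari.barandun.ea2024Spectra}, and the authors explicitly refer the reader there for the construction of $B_0$. So there is no in-paper argument to compare against.

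That said, your outline is the standard route and is essentially what the cited reference does. The Fredholm/Gohberg part for both inclusions is exactly right. For the boundary-matching step you flag as the obstacle, the resolution is cleaner than you fear. With the transfer matrices $M_j(\lambda)=\begin{psmallmatrix}(\lambda-a_j)/b_j & -c_j/b_j\\ 1 & 0\end{psmallmatrix}$ and $M(\lambda)=M_k\cdots M_1$, the recursion $(P_{j},Q_{j})\coloneqq((M_j\cdots M_1)_{11},(M_j\cdots M_1)_{21})$ satisfies $Q_{j+1}=P_j$, so the condition that $(1,0)^\top$ be an eigenvector of $M(\lambda)$ is $M_{21}(\lambda)=P_{k-1}(\lambda)=0$, a polynomial of degree exactly $k-1$. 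Moreover, the three-term recursion for $P_j$ is (up to the nonzero scalars $b_j$) the standard recursion for the characteristic polynomials of the leading principal minors of $T(a)$, so the roots of $P_{k-1}$ are precisely the eigenvalues of the top-left $(k-1)\times(k-1)$ block of $T(a)$; this block is $B_0$. No higher-degree pencil arises.

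One point you should tighten: $M_{21}(\lambda)=0$ only says $(1,0)^\top$ is \emph{some} eigenvector of $M(\lambda)$, possibly for the eigenvalue outside the disk, in which case the corresponding solution is not in $\ell^2$. This is harmless for the stated upper inclusion (you only need $\sigma(T(a))\setminus(\sigma_{\det}\cup\sigma_{\mathrm{wind}})\subseteq\sigma(B_0)$, not equality), but your write-up should make clear that you are proving containment in $\sigma(B_0)$, not a characterisation of the point spectrum off the essential spectrum.
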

The $k-1$ points mentioned in \cref{thm: GFBT} are the points in $\sigma(B_0)$, see \cite{ammari.barandun.ea2024Spectra} for details on $B_0$.
\begin{lemma}\label{lem:detform}
    Let $T(a)$ be a tridiagonal $k$-Toeplitz operator with symbol $a(z)$ as in \eqref{eq: k-Toeplitz base symbol}. Then, we have 
    \begin{equation*}
        \det (a(z) - \lambda) = \psi(z) + g(\lambda) \qquad \lambda,z \in \C,
    \end{equation*}
    where 
    \begin{equation}
        \psi(z)=(-1)^{k+1}\left((\prod_{i=1}^k c_i)z+(\prod_{i=1}^k b_i)z^{-1}\right),
    \end{equation}
    and $g(\lambda)$ is a polynomial\footnote{See \cite[Appendix A]{ammari.barandun.ea2024Spectra} for details.} of degree $k$.
\end{lemma}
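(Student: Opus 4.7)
The plan is to exploit the fact that the only $z$-dependence in $a(z)-\lambda I$ sits in the two corner entries $(1,k)$ and $(k,1)$. Writing rows $1$ and $k$ of $a(z)-\lambda I$ as
\begin{align*}
r_1 &= r_1^{(0)} + c_k z\, e_k^\top, \\
r_k &= r_k^{(0)} + b_k z^{-1}\, e_1^\top,
\end{align*}
with $r_1^{(0)}$ and $r_k^{(0)}$ the $z$-independent parts and $e_i$ the $i$-th standard basis vector, multilinearity of the determinant in rows $1$ and $k$ decomposes $\det(a(z)-\lambda I)$ into four pieces indexed by whether each corner correction is retained. The term with neither correction is $z$-independent, and the term with both corrections contributes $c_k z \cdot b_k z^{-1} = c_k b_k$ times a determinant that is also $z$-independent; together these will form $g(\lambda)$. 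The remaining two terms carry factors of $z$ and $z^{-1}$ respectively, and I must identify them.

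To compute the coefficient of $z$, I would expand the associated determinant along its first row $e_k^\top$: it equals $(-1)^{1+k}$ times the determinant of the $(k-1)\times(k-1)$ submatrix obtained by deleting row $1$ and column $k$ from $a(z)-\lambda I$ with the $(k,1)$ entry zeroed. A direct inspection shows this submatrix is upper triangular with diagonal $c_1,\ldots,c_{k-1}$, because the sub-diagonal entries $c_i$ at positions $(i+1,i)$ of the original land exactly on the new diagonal after deletion of row $1$, while all other non-vanishing entries lie strictly above. Its determinant is $\prod_{i=1}^{k-1} c_i$, so multiplying by $c_k$ gives coefficient of $z$ equal to $(-1)^{k+1}\prod_{i=1}^k c_i$. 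An entirely symmetric computation—expanding the analogous matrix along its first column $e_1$ and using a lower triangular submatrix with diagonal $b_1,\ldots,b_{k-1}$—yields the coefficient of $z^{-1}$ as $(-1)^{k+1}\prod_{i=1}^k b_i$. This establishes the claimed form of $\psi(z)$, and in particular shows that the $z^{\pm 1}$ coefficients are $\lambda$-independent.

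For the degree claim on $g(\lambda)$, I would note that the ``neither correction'' contribution is precisely the determinant of the ordinary tridiagonal matrix with diagonal $(a_i-\lambda)_{i=1}^k$ and off-diagonals $b_i, c_i$; the product of the diagonal entries contributes a leading term $(-1)^k \lambda^k$, so this piece has degree exactly $k$ in $\lambda$. The ``both corrections'' piece has two $\lambda$-independent rows (namely $e_k^\top$ and $e_1^\top$), so its determinant has degree at most $k-2$ in $\lambda$ and cannot disturb the leading term. The main routine check is the triangularity of the two $(k-1)\times(k-1)$ submatrices; the only delicate point is keeping the cofactor sign $(-1)^{1+k}$ consistent through both expansions, which can be verified on the small cases $k=2,3$.
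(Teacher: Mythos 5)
Your computation is correct, and it is worth noting that the paper itself does not prove this lemma: it imports it from \cite[Appendix A]{ammari.barandun.ea2024Spectra}, so your argument is a genuinely self-contained replacement rather than a reproduction. The decomposition by multilinearity in rows $1$ and $k$ cleanly isolates the four contributions, and your identification of the two $(k-1)\times(k-1)$ cofactors as triangular matrices with diagonals $c_1,\dots,c_{k-1}$ and $b_1,\dots,b_{k-1}$ is right: after deleting row $1$ and column $k$, the subdiagonal entries $c_i$ do land on the diagonal of the minor, and symmetrically for the $b_i$. The sign bookkeeping $(-1)^{1+k}$ from the cofactor expansion matches the stated $(-1)^{k+1}$ prefactor of $\psi$. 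The degree argument for $g(\lambda)$ is also sound: the ``neither correction'' piece is the characteristic polynomial of a genuine tridiagonal matrix with leading term $(-1)^k\lambda^k$, while the ``both corrections'' piece has two rows free of $\lambda$ and hence degree at most $k-2$. The only point I would make explicit is the small-$k$ degeneracy: for $k=2$ the corner entries $c_kz$ and $b_kz^{-1}$ sit in the same matrix positions as $b_1$ and $c_1$, so the entries are $b_1+c_2z$ and $c_1+b_2z^{-1}$ rather than separate corners (and for $k=1$ the symbol is scalar). Your row decomposition still applies verbatim in these cases and yields the same formula, but since the displayed form of the symbol in \eqref{eq: k-Toeplitz base symbol} implicitly assumes $k\geq 3$, a sentence confirming that the argument covers $k=1,2$ would make the proof airtight.
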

As a consequence of the above two results, we can see that if we define $E$ to be the ellipse (with interior) traced out by $\psi(S^1)$, then we have 
\begin{align}
\label{eq: det and wind based on E}
    \sigma_{\det}(a) = (-g)^{-1}(\partial E) \quad \text{and} \quad \sigma_{\mathrm{wind}}(a) = (-g)^{-1}(\interior E).
\end{align}
The following results hold. 
\begin{lemma}\label{lem:ellipseprops}
    The map 
    \begin{align*}
        \psi:\mc B &\to E\\
        \qp &\mapsto \psi(e^{-\i(\qp)})
    \end{align*}
    is well-defined, surjective and for every $\xi \in E$, there exist unique $$\alpha \in [-\pi,\pi], \beta \in [0,\Delta/2]$$ such that
    \[
        \psi^{-1}(\xi) = \{\qp, \qpconj\}.
    \]

    Furthermore, if we denote by $E_\beta$ the (with interior) ellipse traced out by $\psi([-\pi,\pi]+\i\beta)$, then we have $E_\beta = E_{\Delta-\beta}$ and $\operatorname{int} E_{\beta_1} \supset E_{\beta_2}$ for $0\leq \beta_1<\beta_2 \leq \Delta/2$.
\end{lemma}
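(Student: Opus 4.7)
The plan is to work directly with the explicit Laurent polynomial $\psi(z) = (-1)^{k+1}(Cz + B/z)$, where I abbreviate $C \coloneqq \prod_{i=1}^k c_i$ and $B \coloneqq \prod_{i=1}^k b_i$, so that $|B/C| = e^\Delta$ and $\arg(B/C) = \zeta$. Substituting $z = e^{-\i(\alpha + \i\beta)} = e^{\beta - \i\alpha}$ and factoring out the common phase $e^{\i(\arg B + \arg C)/2}$, a short calculation brings the expression into the standard Joukowski form $(p+q)\cos w + \i (p-q)\sin w$ with $p = |C|e^\beta$, $q = |B|e^{-\beta}$, and $w = w(\alpha)$ affine in $\alpha$. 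This exhibits $\partial E_\beta$ (in suitably rotated coordinates) as an ellipse with semi-axes $|C|e^\beta + |B|e^{-\beta}$ and $\bigl||C|e^\beta - |B|e^{-\beta}\bigr|$, proper for $\beta \in [0, \Delta/2)$ and degenerating to a line segment exactly at $\beta = \Delta/2$. A one-line derivative computation then confirms that both semi-axes are strictly decreasing on $[0, \Delta/2]$, yielding the strict nesting $E_{\beta_2} \subset \interior E_{\beta_1}$ for $0 \leq \beta_1 < \beta_2 \leq \Delta/2$.

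The symmetry $E_\beta = E_{\Delta - \beta}$ and the two-to-one structure both emerge from Vieta's formulas. The equation $\psi(z) = \xi$ rearranges to the quadratic $(-1)^{k+1}Cz^2 - \xi z + (-1)^{k+1}B = 0$, whose two roots $z_\pm$ satisfy $z_+ z_- = B/C$. Hence $|z_+||z_-| = e^\Delta$ and $\arg z_+ + \arg z_- \equiv \zeta \pmod{2\pi}$. Writing $z_\pm = e^{\beta_\pm - \i \alpha_\pm}$, these identities become $\beta_+ + \beta_- = \Delta$ and $\alpha_+ + \alpha_- \equiv -\zeta \pmod{2\pi}$, which is precisely the conjugate relation asserted in the statement. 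Swapping the two roots also gives $E_\beta = E_{\Delta-\beta}$.

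The main obstacle will be surjectivity: I need that for every $\xi \in E$ both roots $z_\pm$ have modulus in $[1, e^\Delta]$, so that the corresponding $\beta$-values both lie in $[0, \Delta]$, with one necessarily in $[0, \Delta/2]$ and its Vieta partner in $[\Delta/2, \Delta]$. My plan is to apply the argument principle to $\psi(z) - \xi$, which is meromorphic with a simple pole at $0$ and at most two zeros: the number of zeros in the open unit disk equals $1$ plus the winding number of $\partial E$ around $\xi$. For $\xi \in \interior E$ this winding number is $\pm 1$ because $\partial E$ is a simple closed curve; using the explicit Joukowski parametrization from the first paragraph pins the orientation, so that for $\Delta \geq 0$ the winding must be $-1$, forcing zero roots inside the open disk. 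Both roots then lie in $\{|z| \geq 1\}$, and the product constraint $|z_+||z_-| = e^\Delta$ confines them to the closed annulus $\{1 \leq |z| \leq e^\Delta\}$. The boundary case $\xi \in \partial E$ is immediate: one root lies on $S^1$ and its Vieta partner has modulus $e^\Delta$. Uniqueness of the pair is then automatic from the degree of the quadratic, concluding the proof.
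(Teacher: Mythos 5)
Your proof is correct, and while it rests on the same explicit Joukowski-type parametrisation $\psi(e^{-\i(\qp)}) = (-1)^{k+1}\bigl(Ce^{\beta}e^{-\i\alpha} + Be^{-\beta}e^{\i\alpha}\bigr)$ and the same nested-ellipse monotonicity computation as the paper, two of your key steps are genuinely different. For the conjugate pairing, the paper factors $\psi$ as a unimodular constant times a \emph{real-coefficient} Joukowski map in the rotated variable $ze^{-\i\zeta/2}$ and reads off the reflection symmetry $\beta \mapsto \Delta-\beta$, $\alpha\mapsto -\zeta-\alpha$ directly from the sign flip of the semi-minor axis; you instead extract the same relation from Vieta's formula $z_+z_- = B/C$ on the quadratic $\psi(z)=\xi$, which is arguably cleaner and makes the identity $\beta_++\beta_-=\Delta$, $\alpha_++\alpha_-\equiv-\zeta$ transparent without any rotation bookkeeping. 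For surjectivity, the paper relies on the continuous sweeping of the nested family $\partial E_\beta$ from $\partial E$ down to the degenerate segment at $\beta=\Delta/2$, whereas you locate both roots of the quadratic in the annulus $1\le\abs{z}\le e^{\Delta}$ via the argument principle (winding $-1$ of $\psi(S^1)$ about interior points, cancelling the simple pole at $0$) combined with the modulus constraint $\abs{z_+}\abs{z_-}=e^{\Delta}$; this is a more self-contained justification of the claim that \emph{all} preimages of $\xi\in E$ lie in $\mc B$, a point the paper's proof passes over somewhat tersely. Your explicit remark that \emph{both} semi-axes are strictly decreasing on $[0,\Delta/2]$ is also a welcome precision, since the nesting $E_{\beta_2}\subset\interior E_{\beta_1}$ genuinely needs both, not just the semi-minor axis discussed in the paper.
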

The proof of \cref{lem:ellipseprops} can be found in \cref{app:technical results}. An immediate consequence of this lemma is that for a given quasiperiodicity $\qp$ we have $\sigma(a(\qp)) = \sigma(a(\qpconj))$. We show graphically the statement of \cref{lem:ellipseprops} in \cref{fig:betacollapse}.

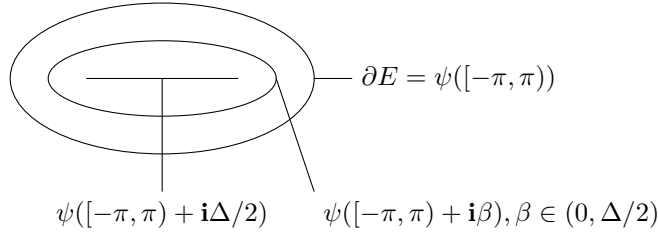
\begin{figure}[!h]\label{fig:betacollapse}
    \centering
    \begin{tikzpicture}
    \draw (0,0) ellipse (2 and 1);
    \draw (0,0) ellipse (1.5 and 0.5);

    \draw[-] (-1,0) -- (1,0);

    \draw[-] (0,-1.5) node[anchor=north]{$\psi([-\pi,\pi)+\i\Delta/2)$} -- (0,0);
    \draw[-] (2.5,0) node[anchor=west]{$\partial E = \psi([-\pi,\pi))$} -- (2,0);
    \draw[-] (2,-1.5) node[anchor=north west]{$\psi([-\pi,\pi)+\i\beta), \beta \in (0,\Delta/2)$} -- (1.5,0);
    \end{tikzpicture}
    \caption{Parametrisation of the ellipse $E$ by $\psi$ for $\alpha\in Y^*\simeq S^1$ and $\beta\in [0,\Delta/2]$. As $\beta$ increases from $0$ to $\Delta/2$ the corresponding ellipse drawn out by $\psi(S^1+\i\beta)$ shrinks uniformly.}
\end{figure}

We are now ready to prove \cref{thm: GFBT}.
\begin{proof}[Proof of \cref{thm: GFBT}]
    We first remark that by \cref{prop: char of spectrum k toeplitz} we may consider $\sigma_{\det}(a) \cup \sigma_{\mathrm{wind}}(a)$ instead of $\sigma(T(a))$ by taking into account $(k-1)$ points that we may miss.

    We begin by proving the uniqueness statement. Let $\lambda \in \sigma(T(a))$. From \eqref{eq: det and wind based on E} we know that this is equivalent to $-g(\lambda)\in E$. By \cref{lem:ellipseprops}, there exist unique $\alpha\in [-\pi,\pi], \beta \in [0,\Delta/2]$ such that
    $\psi(\qp) = \psi(\qpconj) = -g(\lambda)$.
    This is equivalent to $\det(a(\qp)-\lambda)=\det(a(\qpconj)-\lambda)=0$ by \cref{lem:detform}, which also implies that $\qp$ is the unique quasiperiodicity such that $\lambda \in \sigma(a(\qp)) = \sigma(a(\qpconj))$. This also ensures $\sigma(T(a)) \subset \bigcup_{\qp \in \mc B} \sigma(a(\qp))$.
    
    To show $\sigma(T(a)) \supset \bigcup_{\qp \in \mc B} \sigma(a(\qp))$, we let $\lambda \in \sigma(a(\qp))$ for some $\qp \in \mc B$. After going backwards through the above argument, we find that this implies that $-g(\lambda) = \psi(\qp) \in E$. But, $-g(\lambda)\in E$ is equivalent to $\lambda \in \sigma_{\det}(a) \cup \sigma_{\mathrm{wind}}(a)$ as desired.
\end{proof}

At the beginning of this section, we have motivated the introduction of an imaginary part with the heuristic of taking into account the possible decay of the eigenvectors. This heuristic is made formal with the following proposition.
\begin{theorem}\label{prop:eigenvectors}
    Let $\lambda\in \sigma_{\det}(a) \cup \sigma_{\mathrm{wind}}(a)$, with the uniquely determined corresponding quasiperiodicities $\qp, \qpconj$.
    Let $\bm u_1, \bm u_2$ be eigenvectors of $a(\qp)$, $a(\qpconj)$ associated with that eigenvalue $\lambda$. Then, we can obtain a symbolic eigenvector $T(a) \bm u = \lambda \bm u$ of the Toeplitz operator as a linear combination of the $(\qp)$-quasiperiodic extension $\widetilde{\bm u}_1$ and the $(\qpconj)$-quasiperiodic extension $\widetilde{\bm u}_2$ of $\bm u_1, \bm u_2$, respectively. Furthermore, all eigenvectors $\bm u$ of $T(a)$ are of this form.
    Here, the $(\qp)$-quasiperiodic extension of a vector $\bm v$ is defined as 
    \[
        \widetilde{\bm v} \coloneqq (\bm v^\top, z^{-1}\bm v^\top, z^{-2}\bm v^\top, \dots)^\top
    \]
    for $z = e^{-\i(\qp)}$. Consequently, for every $j\in\N$,
    \begin{align}
        \frac{\vert \bm u^{(j+k)} \vert}{\vert \bm u^{(j)} \vert} = e^{-\beta}.
    \end{align}
\end{theorem}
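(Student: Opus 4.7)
The plan is to exploit the block-tridiagonal structure of $T(a)$ together with a Floquet-type exponential ansatz. Writing a prospective eigenvector as $\bm u = (\bm x_1^\top, \bm x_2^\top, \dots)^\top$ with $\bm x_m \in \C^k$, and decomposing the symbol as $a(z) = A_1 z + A_0 + A_{-1} z^{-1}$ where $A_{\pm 1}$ are rank-one matrices whose only nonzero entries are $c_k$ and $b_k$ respectively, the equation $T(a) \bm u = \lambda \bm u$ becomes the block recurrence
\[
A_1 \bm x_{m-1} + A_0 \bm x_m + A_{-1} \bm x_{m+1} = \lambda \bm x_m \quad (m \geq 2),
\]
together with the left boundary condition $A_0 \bm x_1 + A_{-1} \bm x_2 = \lambda \bm x_1$.

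Inserting the ansatz $\bm x_m = z^{-(m-1)} \bm v$ into the recurrence reduces it to $(A_1 z + A_0 + A_{-1} z^{-1}) \bm v = a(z) \bm v = \lambda \bm v$. By \cref{lem:detform}, $\det(a(z) - \lambda) = \psi(z) + g(\lambda)$, and after clearing $z^{-1}$ this is a quadratic polynomial in $z$ with exactly two roots; by \cref{thm: GFBT} these roots are precisely $z_1 = e^{-\i(\qp)}$ and $z_2 = e^{-\i(\qpconj)}$, with associated eigenvectors of $a(z_i)$ at $\lambda$ given by $\bm u_1, \bm u_2$. The two pure modes $z_i^{-(m-1)} \bm u_i$ are then exactly the $(\qp)$- and $(\qpconj)$-quasiperiodic extensions $\widetilde{\bm u}_1, \widetilde{\bm u}_2$.

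Since the pure modes solve the recurrence but not the boundary condition in isolation, I would form the combination $\bm x_m = c_1 z_1^{-(m-1)} \bm u_1 + c_2 z_2^{-(m-1)} \bm u_2$ and substitute into the boundary equation. Using $a(z_i) \bm u_i = \lambda \bm u_i$, the boundary condition collapses to $A_1(c_1 z_1 \bm u_1 + c_2 z_2 \bm u_2) = 0$; since $A_1 = c_k \bm e_1 \bm e_k^\top$ has rank one and $c_k \neq 0$, this is the single scalar constraint
\[
c_1 z_1 (\bm u_1)^{(k)} + c_2 z_2 (\bm u_2)^{(k)} = 0,
\]
which always admits a nontrivial solution and yields the desired symbolic eigenvector. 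For the converse, the underlying scalar second-order tridiagonal recurrence has a two-dimensional solution space in which the two Floquet modes form a basis (whenever they are linearly independent), so every symbolic eigenvector of $T(a)$ must be a linear combination of $\widetilde{\bm u}_1$ and $\widetilde{\bm u}_2$.

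The decay statement follows directly from the ansatz: shifting an index by a full block in any pure mode contributes the factor $|z_i^{-1}|$, and since $\beta \leq \beta'$ the modulus $|z_1^{-1}| = e^{-\beta}$ dominates, which governs $|\bm u^{(j+k)}|/|\bm u^{(j)}|$. The main obstacle I anticipate is the degenerate case $z_1 = z_2$, occurring precisely at $\beta = \Delta/2$ on the inner ellipse $E_{\Delta/2}$ of \cref{lem:ellipseprops}, where the pure exponential ansatz yields only one independent solution and a second one of generalised form $m z_1^{-(m-1)} \bm w + z_1^{-(m-1)} \bm v$ must be constructed. One must also verify that the scalar boundary equation is genuinely solvable (i.e.\ that $(\bm u_1)^{(k)}$ and $(\bm u_2)^{(k)}$ do not both vanish), which can be traced back to the nondegeneracy assumption on the off-diagonal entries of the symbol.
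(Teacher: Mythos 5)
Your proposal is correct and follows essentially the same route as the paper's proof: the quasiperiodic (Floquet) ansatz turns the bulk block recurrence into the eigenvalue problem for $a(z)$, a linear combination of the two modes is chosen to absorb the single rank-one boundary defect in the first row, the converse follows from the two-dimensionality of the recurrence's solution space (equivalently, \cref{lem:tridiagfd}), and the degenerate case $\qp=\qpconj$ is deferred to a separate construction, exactly as the paper defers it to a citation. Your version is in fact more explicit about why the boundary condition reduces to one scalar equation; the only imprecisions are that $z_1=z_2$ requires not just $\beta=\Delta/2$ but also $\alpha=-\zeta/2$ or $-\zeta/2+\pi$, and that the boundary equation is trivially solvable even if both $(\bm u_1)^{(k)}$ and $(\bm u_2)^{(k)}$ vanish (that situation is instead excluded a posteriori by the one-dimensionality of the eigenspace), neither of which affects the argument.
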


\begin{proof}
    We first notice that $\widetilde{\bm u}_1$ and $\widetilde{\bm u}_2$ are linearly independent for $\qp \in \mc B$ with $\qp \neq \qpconj$ \cite{ammari.barandun.ea2024Spectra}. 
    
    Furthermore, we can see that both satisfy $T(a)\widetilde{\bm u}_i = \lambda \widetilde{\bm u}_i$ in all but the first row. We can thus find a linear combination of $\widetilde{\bm u}_1$ and $\widetilde{\bm u}_2$ which also satisfies the first row and is thus a \emph{proper} eigenvector of $T(a)$.
    For the case when $\qp = \qpconj$, which occurs if and only if $\beta = \Delta/2$ and $\alpha = 0$ or $\alpha = \pi$, we refer the reader to \cite[Proof of Theorem 2.9]{ammari.barandun.ea2024Spectra}.
    Finally, because the eigenspaces of tridiagonal operators with non-zero off-diagonal elements are at most one-dimensional (see \cref{lem:tridiagfd}), we know that all eigenvectors of $T(a)$ must take this form.
\end{proof}

\begin{remark}
    \cref{prop:eigenvectors} hence justifies why $\Delta$ is called the \emph{non-reciprocity rate}, as it directly translates into the decay rate of the eigenvector, a peculiarity of non-reciprocal systems. However, in the generic case an eigenvector $\bm u$ of $T(a)$ will be a linear combination of $\widetilde{\bm u}_1$ and $\widetilde{\bm u}_2$ with decay rates $\beta$ and $\Delta-\beta$ and thus has decay rate $\max\{\beta,\Delta-\beta\}$ which is maximised if $\beta=\Delta/2$. 
    Hence, the actual \emph{maximal rate of decay} is $\Delta/2$. 

    We can also see that the construction in \cref{prop:eigenvectors} is independent of the actual entries of $T(a)$ in its first row. Hence, it continues to work even if the top left edge of $T(a)$ is perturbed. 
\end{remark}

\section{Three Spectral Limits} \label{sec:3limits}
\begin{figure}
    \centering
    \includegraphics[width=\textwidth]{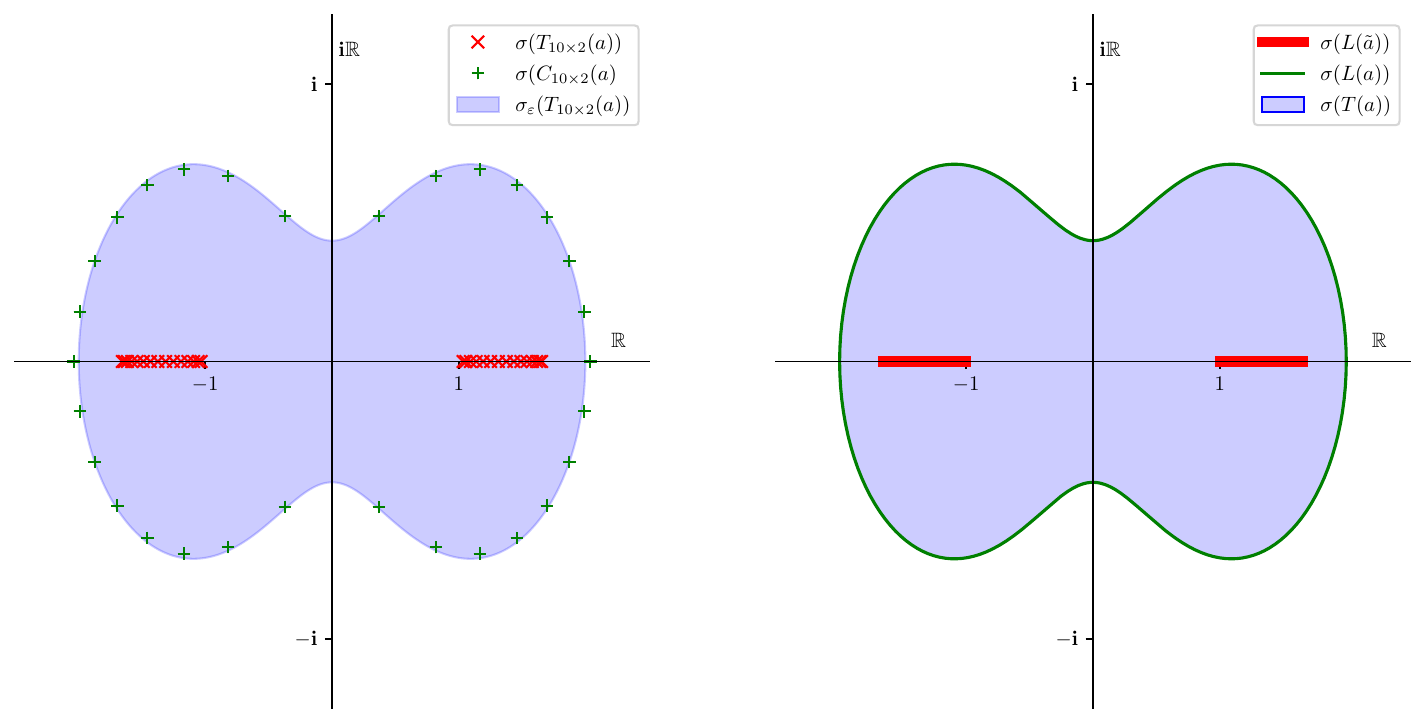}
    \caption{Illustration of the three spectral limits. We see that the eigenvalues of the circulant matrix $C_{2\times 10}(a)$ (green) arrange around the symbol curve and hence converge to the Laurent operator limit $L(a)$. The eigenvalues of the Toeplitz matrix $T_{2\times 10}(a)$ (red) arrange around the collapsed symbol $\Tilde{a}$ (as defined in the proof of \cref{thm:obcconv}) and hence converge to the collapsed Toeplitz (or Laurent) operator $T(\Tilde{a})$. The $\varepsilon$-pseudospectrum of $T_{mk}(a)$ corresponds exactly to the interior of the symbol curve and thus converges to the actual Toeplitz limit $T(a)$. }
    \label{fig:3lim}
\end{figure}

Having restored the Floquet-Bloch decomposition for the Toeplitz operator limit $T(a)$, we now aim to understand how and if finite Toeplitz matrices $T_{mk}(a)$ converge to this limit as $m\to \infty$. Crucially, while in the Hermitian setting the symbol $a(z)$ is collapsed, in general its eigenvalues trace out a path with nonempty interior. 
Any point in this interior is  exponentially close to a pseudoeigenvalue of $T_{mk}(a)$ in the limit $m\to \infty$.
This in turn causes the finite system and its limiting spectrum to be highly sensitive to boundary conditions. \cref{fig:3lim} shows the finite spectra under different boundary conditions as well as the pseudospectrum. The collapsed symbol causes their respective limits to coincide in the Hermitian setting but we can see that they diverge in the non-reciprocal setting.

The appropriate generalisation of the Brillouin zone depends on the limit of interest and does not necessarily match the generalised Brillouin zone for the Toeplitz operator limit $T(a)$ as defined in \cref{eq: GBZ}.
For open and periodic boundary conditions, we will now characterise the limiting spectrum and give the appropriate generalised Brillouin zone. Finally, we will investigate the limiting pseudospectrum and see how this connects the two boundary conditions.

\subsection{Open Limit} 
First, we aim to characterise the limiting spectrum $\sigma(T_{mk}(a))$ as $m\to \infty$, which corresponds to a growing finite system.
The main idea is that for non-reciprocal tridiagonal systems, we can perform a change of basis to obtain a similar symmetric system. The symbol of this system is collapsed and we can recover the traditional spectral convergence and decomposition. Then we may transform back into the original basis in order to achieve a Floquet-Bloch relation, but with a Brillouin zone shifted into the complex plane to account for the exponential decay of the eigenvectors.

\begin{theorem}\label{thm:obcconv}
    Let $T(a)$ be a tridiagonal Toeplitz operator with symbol $a(z)$ and $b_i, c_i\neq 0$ for all $1\leq i \leq k$. We then have
    \[
        \lim_{m\to \infty} \sigma(T_{mk}(a)) = \bigcup_{\alpha\in Y^*}\sigma(a(e^{-\i\spatialperiod(\alpha+\i\Delta/(2\spatialperiod))})).
    \]
\end{theorem}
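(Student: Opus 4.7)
The plan is to exploit the standard symmetrisation trick for tridiagonal matrices, adapted to the $k$-periodic setting: conjugating $T_{mk}(a)$ by a suitable diagonal matrix produces a $k$-Toeplitz matrix $T_{mk}(\tilde a)$ whose symbol is collapsed (and Hermitian on $S^1$ in the natural case $b_i c_i>0$). The classical Hermitian convergence theory recalled around \eqref{eq:ClassicalConv}--\eqref{eq:ClassicalFBT} then applies to this auxiliary system, and transferring the result back to $a$ through a matching similarity at the $k\times k$ symbol level introduces exactly the shift of the Brillouin zone by $\i\Delta/2$ that appears in the statement. I would take $\spatialperiod=1$ throughout and restore the physical period at the end via the rescaling rule of the remark following \cref{thm: GFBT}.

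Concretely, I would set $D_{mk}=\diag(d_1,\dots,d_{mk})$ with $d_1=1$ and $d_{i+1}/d_i=\sqrt{c_i/b_i}$, indices understood modulo $k$. Since $(D^{-1}MD)_{ij}=(d_j/d_i)M_{ij}$, a direct check gives
$$D_{mk}^{-1}T_{mk}(a)D_{mk}=T_{mk}(\tilde a),$$
where $\tilde a$ is the tridiagonal $k$-Toeplitz symbol obtained from $a$ by replacing every off-diagonal pair $(b_i,c_i)$ by $(\sqrt{b_ic_i},\sqrt{b_ic_i})$. On $S^1$ the new symbol satisfies $\tilde a(\bar z)^\top=\tilde a(z)$, so by the earlier proposition on symmetric/Hermitian symbols it is collapsed. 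The Hausdorff convergence \eqref{eq:ClassicalConv} applied to $\tilde a$, the classical Floquet--Bloch identity \eqref{eq:ClassicalFBT}, and the similarity-invariance of the spectrum together yield
$$\lim_{m\to\infty}\sigma(T_{mk}(a))=\lim_{m\to\infty}\sigma(T_{mk}(\tilde a))=\sigma(L(\tilde a))=\bigcup_{\alpha\in Y^*}\sigma(\tilde a(e^{-\i\alpha})).$$

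The remaining task is to identify each fibre $\sigma(\tilde a(e^{-\i\alpha}))$ with $\sigma(a(e^{-\i(\alpha+\i\Delta/2)}))$. I would perform the analogous conjugation directly on the $k\times k$ matrix $a(z)$, with $\tilde D=\diag(d_1,\dots,d_k)$ and the same ratios $d_{i+1}/d_i=\sqrt{c_i/b_i}$. The bulk super- and sub-diagonal entries symmetrise to $\sqrt{b_ic_i}$ exactly as before, while the compounded factor $d_k/d_1=\prod_{i=1}^{k-1}\sqrt{c_i/b_i}$ absorbs into the corner entries $c_kz$ and $b_kz^{-1}$. Using $\prod_{i=1}^k\sqrt{c_i/b_i}=e^{-\Delta/2}$, a short calculation gives
$$\tilde D^{-1}a(z)\tilde D=\tilde a(ze^{-\Delta/2}),$$
so $\sigma(a(z))=\sigma(\tilde a(ze^{-\Delta/2}))$. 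Setting $ze^{-\Delta/2}=e^{-\i\alpha}$, equivalently $z=e^{-\i(\alpha+\i\Delta/2)}$, yields $\sigma(\tilde a(e^{-\i\alpha}))=\sigma(a(e^{-\i(\alpha+\i\Delta/2)}))$, which substituted into the previous display gives the claim.

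The main obstacle is verifying that the Hausdorff convergence theory cited around \eqref{eq:ClassicalConv} genuinely applies to $\tilde a$ beyond the strictly Hermitian regime. When $b_ic_i>0$ the symmetrised matrix is truly Hermitian on $S^1$ and the Bauer--Fike plus B\"ottcher--Silbermann argument goes through verbatim. In the general complex-coefficient case one must either fix a consistent branch of $\sqrt{\cdot}$ and appeal to the variant of the convergence theorem valid for symmetric symbols, or exploit the normality afforded by $\tilde a(\bar z)^\top=\tilde a(z)$ to control the eigenvector-conditioning constant in Bauer--Fike. The sign of $\Delta$ must also be tracked: when $\Delta<0$ the same identities hold with the imaginary shift reversed, matching the treatment of exponentially growing modes alluded to in the remark following \cref{thm: GFBT}.
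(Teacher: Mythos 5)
Your proposal is correct and follows essentially the same route as the paper's own proof: a diagonal similarity $D_{mk}$ symmetrises $T_{mk}(a)$ into a collapsed $k$-Toeplitz matrix $T_{mk}(\tilde a)$, classical convergence and Floquet--Bloch theory apply there, and the matching $k\times k$ similarity at the symbol level produces exactly the shift $z\mapsto ze^{-\Delta/2}$, i.e.\ the Brillouin zone shifted by $\i\Delta/2$. Your explicit corner-entry computation and your caveat about applying the Hermitian convergence theory to a merely symmetric (possibly non-normal) $\tilde a$ match, and if anything slightly sharpen, the corresponding remarks in the paper.
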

\begin{proof}
    We begin by defining the diagonal change of basis $D_{mk}\in \C^{mk\times mk}$ with
    \[
        (D_{mk})_{ii} = \sqrt{\frac{T_{mk}(a)_{i-1,i}}{T_{mk}(a)_{i,i-1}}}(D_{mk})_{i-1,i-1} \quad \text{and} \quad (D_{mk})_{00}=1,
    \]
    \emph{i.e.}, the diagonal matrix given by the cumulative products of the off-diagonals of $T_{mk}(a)$. For the sake of simplicity, we assume that $b_ic_i>0$ for the rest of this proof. Nevertheless, note that the arguments can be extended to general nonzero off-diagonals after accounting for branch cuts of the square root. Using $D_{mk}$, we can now perform a change of basis and find that $T(\Tilde{a})_{mk}\coloneqq D_{mk}T_{mk}(a)D_{mk}^{-1}$ is symmetric and remains tridiagonal $k$-Toeplitz, justifying our notation and making $\Tilde{a}$ a well-defined symbol. This can be seen by comparing the entries and holds even if $D_{mk}$ or $T_{mk}(a)$ contain complex entries. While $T(\Tilde{a})_{mk}$ is symmetric, it is in general non-Hermitian or even not normal, as it can contain complex values. However, being symmetric ensures that the symbol $\Tilde{a}$ is collapsed and allows us to apply classical Toeplitz theory of convergence (namely \cref{eq:ClassicalFBT,eq:ClassicalConv}). In this case, we get
    \[
         \lim_{m\to \infty}\sigma(T_{mk}(a)) = \lim_{m\to \infty}\sigma(T_{mk}(\Tilde{a})) = \sigma(L(\Tilde{a})) = \bigcup_{\alpha\in Y^*}\sigma(\Tilde{a}(e^{-\i \spatialperiod\alpha})).
    \]
    It remains to show that $\sigma(\Tilde{a}(e^{-\i \spatialperiod\alpha}))=\sigma(a(e^{-\i\spatialperiod(\alpha+\i\Delta/2L)}))$. This follows from the fact that 
    $D_{1k}^{-1}\Tilde{a}(e^{-\i \spatialperiod\alpha})D_{1k} = a(e^{-\i \spatialperiod(\alpha+\i\Delta/(2\spatialperiod))}).$
\end{proof}

This result is illustrated in \cref{fig:3lim} where the spectrum of the Toeplitz matrix on the left-hand side (corresponding to the open boundary condition) converges to the spectrum of the Toeplitz operator with collapsed symbol $T(\Tilde{a})$ on the right-hand side. As we just proved the appropriate \emph{generalised Brillouin zone} to decompose the spectrum of this operator is the classical Brillouin zone, shifted into the complex plane:
\[
\mc B_\text{OBC} = \bigr\{ \alpha + \i\Delta/(2\spatialperiod) \mid \alpha\in Y^* \bigr\}.
\]

Consequently, for tridiagonal Toeplitz systems with open boundary conditions, shifting the Brillouin zone by $\Delta/(2\spatialperiod)$ restores spectral convergence as well as the Floquet-Bloch decomposition. This corresponds to the fact that in the tridiagonal case, all the eigenmodes have the same rate of decay. Namely, this decay is the maximal decay that is given explicitly by $\Delta/(2\spatialperiod)$.

\subsection{Periodic boundary conditions and the Laurent operator limit}
We can impose periodic boundary conditions on $T_{mk}(a)$ and  get the tridiagonal $k$-circulant matrix
\begin{equation}
    (C_{mk}(a))_{ij} \coloneqq \begin{cases}
        c_k & i=0, j=mk,\\
        b_k & i=mk, j=0,\\
        (T_{mk}(a))_{ij} & \text{otherwise}.\\
    \end{cases}
\end{equation}

The following result holds. 
\begin{theorem}\label{thm:pbcconv}
Let $C_{mk}(a)$ be a tridiagonal $k$-circulant matrix as above. Then, we have the following spectral decomposition:
    \[
        \sigma(C_{mk}(a)) = \bigcup_{j=0}^{m-1}\sigma(a(e^{2\pi\i j/m})).
    \]
Furthermore, if we let $m \to \infty$,
\begin{equation}\label{equ:periodiclimiting}
    \sigma(C_{mk}(a)) = \bigcup_{j=0}^{m-1}a(e^{2\pi\i j/m}) \to \bigcup_{\alpha\in Y^*}a(e^{-\i\spatialperiod \alpha}) = \sigma(L(a)).
\end{equation}
\end{theorem}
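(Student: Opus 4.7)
The statement contains two independent claims: an exact spectral decomposition of the finite circulant matrix, and a limit as $m \to \infty$. I would treat them in that order, deriving the first by block-diagonalising $C_{mk}(a)$ via the discrete Fourier transform and the second by a density/continuity argument.

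\textbf{Block DFT diagonalisation.} The first step is to exhibit $C_{mk}(a)$ as a genuine block circulant. Writing the symbol as $a(z) = A_0 + z A_1 + z^{-1} A_{-1}$, where $A_0 \in \C^{k \times k}$ is the diagonal block of $T(a)$, $A_1$ has only the entry $c_k$ in position $(1, k)$, and $A_{-1}$ has only the entry $b_k$ in position $(k, 1)$, I would check by direct inspection (easy to verify on a small example such as $k = m = 3$) that the wrap-around entries prescribed in the definition of $C_{mk}(a)$ precisely fill the $(1, m)$ and $(m, 1)$ block positions with extra copies of $A_1$ and $A_{-1}$ respectively. This yields the Kronecker-product identity
\[
    C_{mk}(a) = I_m \otimes A_0 + S_m \otimes A_1 + S_m^{-1} \otimes A_{-1},
\]
where $S_m$ is the $m \times m$ cyclic shift. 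Diagonalising $S_m = F_m D F_m^{-1}$ with $D = \diag(\omega^{-l})_{l=0}^{m-1}$, $\omega = e^{2\pi \i / m}$, and conjugating the above by $F_m \otimes I_k$ converts $C_{mk}(a)$ into a block-diagonal matrix whose $l$-th block is exactly $A_0 + \omega^{-l} A_1 + \omega^{l} A_{-1} = a(\omega^{-l})$. Reindexing $l \mapsto m - l$ gives the claimed decomposition $\sigma(C_{mk}(a)) = \bigcup_{j=0}^{m-1} \sigma(a(e^{2\pi \i j/m}))$.

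\textbf{Convergence.} For the limiting statement, invoking the classical Floquet-Bloch identity \eqref{eq:ClassicalFBT} immediately identifies the target as $\bigcup_{z \in S^1} \sigma(a(z))$. Since $a : S^1 \to \C^{k \times k}$ is continuous, the spectrum $\sigma(a(z))$ depends continuously on $z$ in the Hausdorff distance (by upper and lower hemicontinuity of eigenvalues in the matrix entries). Because $S^1$ is compact this continuity is uniform, and combined with the fact that the $m$-th roots of unity form a $(2\pi/m)$-net of $S^1$ we obtain Hausdorff convergence
\[
    \bigcup_{j=0}^{m-1}\sigma(a(e^{2\pi \i j/m})) \xrightarrow{m \to \infty} \bigcup_{z \in S^1}\sigma(a(z)) = \sigma(L(a)),
\]
which is the second assertion.

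\textbf{Main obstacle.} The substantive (though mechanical) point is the block-circulant verification in the first step: one must confirm that the corner entries $c_k$ at $(1, mk)$ and $b_k$ at $(mk, 1)$ land at the correct local positions within the blocks at $(1, m)$ and $(m, 1)$ so as to reproduce exactly $A_1$ and $A_{-1}$, rather than, say, their transposes or rotations. If any index is off, the Kronecker-product factorisation breaks and one ends up with a more complicated circulant-like matrix requiring an additional permutation before the DFT can be applied. The convergence step, by contrast, is a soft topological argument with no essential difficulty once continuity of the spectrum is invoked.
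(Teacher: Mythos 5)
Your proof is correct, and the first half takes a genuinely different route from the paper. For the exact decomposition $\sigma(C_{mk}(a)) = \bigcup_{j}\sigma(a(e^{2\pi\i j/m}))$, the paper argues in two halves: the inclusion $\supseteq$ by extending each eigenvector of $a(e^{2\pi\i j/m})$ quasiperiodically to an eigenvector of $C_{mk}(a)$, and the inclusion $\subseteq$ by counting multiplicities ($mk$ eigenvalues on each side), with an extra remark that non-diagonalisable blocks are handled by passing to generalised eigenvectors. Your block-DFT argument instead exhibits an explicit similarity $(F_m\otimes I_k)^{-1}C_{mk}(a)(F_m\otimes I_k) = \bigoplus_l a(\omega^{\pm l})$, which buys something stronger in one stroke: the characteristic polynomials agree, so eigenvalues match with multiplicity and the diagonalisability caveat disappears entirely. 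The price is the bookkeeping you correctly flag as the main obstacle — checking that the corner entries $c_k$ at $(1,mk)$ and $b_k$ at $(mk,1)$ sit at local positions $(1,k)$ and $(k,1)$ inside the corner blocks, which is exactly where $A_1$ and $A_{-1}$ carry their unique nonzero entries, so the Kronecker factorisation does hold. The convergence step is essentially identical in both proofs (Hausdorff convergence of the roots of unity to $S^1$ plus \eqref{eq:ClassicalFBT}); you merely make explicit the continuity of $z\mapsto\sigma(a(z))$ that the paper leaves implicit.
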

\begin{proof}
    We begin by proving the first equality.
    The right inclusion follows from the fact that extending any eigenvector of $a(e^{2\pi\i j/m})$ quasiperiodically yields an eigenvector of $C_{mk}(a)$.
    The left inclusion follows from multiplicity, as the right-hand side yields $m\times k$ eigenvalues. The proposition continues to hold even in the case where some $a(e^{2\pi\i j/m})$ might not be diagonalisable, since the same argument can be carried out for generalised eigenvectors.

    The second equality then follows from the fact that $\{e^{2\pi\i j/m}\mid j\in 0,\dots,m\}\to S^1$ (in the Hausdorff sense) as $m\to \infty$ and equation \cref{eq:ClassicalFBT}.
\end{proof}

We have thus shown the analogue of \cref{thm:obcconv} for periodic boundary conditions. As we can see in \cref{fig:3lim}, imposing periodic boundary conditions on the finite system causes the spectrum of $C_{mk}(a)$ to diverge drastically from the spectrum of $T_{mk}(a)$. This corresponds to the fact that while the eigenmodes of $T_{mk}(a)$ have a decay of $\Delta/2L$, imposing periodic boundary conditions forces the eigenmodes to be decay-free, causing a large perturbation. The non-Hermitian non-reciprocity thus causes the system to be highly sensitive to boundary conditions.
\cref{fig:3lim} further illustrates how the spectrum of the circulant matrices $C_{mk}(a)$ arranges around the symbol curve and thus converges to the Laurent operator limit as $m\to \infty$. The above theorem therefore shows that the appropriate Brillouin zone for this setting is the classical Brillouin zone with no decay: $$\mc B_\text{PBC}=Y^*.$$ 

\subsection{Pseudospectra and the Toeplitz operator limit}
Finally, we investigate the pseudospectrum of the finite system $T_{mk}(a)$. Crucially, while the spectrum of Toeplitz matrices is highly sensitive to boundary conditions, the pseudospectrum is not and using \cite[Corollary 6.16]{bottcher.silbermann1999Introduction},  we find that it converges to the Toeplitz operator limit.
\begin{theorem}[\citeauthor{bottcher.silbermann1999Introduction}]
    Consider a continuous symbol $a\in L^\infty(S^1,\C^{k\times k})$ so that $T(a)$ is tridiagonal. Then, for every $\epsilon > 0$,
\begin{align}
    \lim_{m\to\infty} \sigma_\epsilon(T_{mk}(a)) = \sigma_\epsilon (T(a)).
\end{align}
\end{theorem}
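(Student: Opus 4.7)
The plan is to prove Hausdorff convergence of the pseudospectra via the resolvent-norm characterisation $\sigma_\epsilon(A) = \{\lambda \in \C : \|(A-\lambda I)^{-1}\| \geq 1/\epsilon\}$, reducing the set convergence to pointwise convergence of the resolvent norm $\|(T_{mk}(a) - \lambda I)^{-1}\|\to \|(T(a) - \lambda I)^{-1}\|$, uniform on compact subsets of $\C$. The two Hausdorff inclusions are then handled separately.

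For the easy direction $\sigma_\epsilon(T(a)) \subseteq \liminf_m \sigma_\epsilon(T_{mk}(a))$, I would pick, for $\lambda \in \sigma_\epsilon(T(a))$, a unit $v\in \ell^2$ with $\|(T(a)-\lambda I)v\|\leq \epsilon$ and truncate it to $v_m \coloneqq P_{mk} v$. Since $T(a)$ is tridiagonal, the commutator $[P_{mk},T(a)]v$ has only two nonzero entries, both involving $v_{mk}$ and $v_{mk+1}$, which tend to zero as $v\in \ell^2$. Therefore
$$
\|(T_{mk}(a)-\lambda I) v_m\| - \|(T(a)-\lambda I) v\| \xrightarrow{m\to\infty} 0,
$$
so $\lambda$ lies in every $\sigma_{\epsilon'}(T_{mk}(a))$ with $\epsilon' > \epsilon$ for large $m$, and a standard boundary-approximation argument places $\lambda$ in $\liminf_m \sigma_\epsilon(T_{mk}(a))$.

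The substantive direction is $\limsup_m \sigma_\epsilon(T_{mk}(a)) \subseteq \sigma_\epsilon(T(a))$: for $\lambda \notin \sigma_\epsilon(T(a))$ one needs a uniform bound $\|(T_{mk}(a)-\lambda I)^{-1}\| < 1/\epsilon$ for all large $m$, which does \emph{not} follow from strong convergence of the finite sections. The key input, and the main obstacle, is the asymptotic resolvent identity
$$
\limsup_{m\to\infty} \|(T_{mk}(a)-\lambda I)^{-1}\| = \max\bigl\{\|(T(a)-\lambda I)^{-1}\|,\ \|(T(\tilde a)-\lambda I)^{-1}\|\bigr\},
$$
where $\tilde a(z) \coloneqq a(1/z)^{\top}$ is the symbol of the \enquote{reversed} tridiagonal Toeplitz operator (obtained by swapping sub- and super-diagonals), which arises as the strong limit seen from the bottom-right corner of the finite sections. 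Because singular values are invariant under transposition, $\sigma_\epsilon(T(\tilde a)) = \sigma_\epsilon(T(a))$, so the two resolvent norms on the right coincide and the required uniform bound drops out.

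Proving this asymptotic resolvent formula is the heart of the Böttcher--Silbermann finite-section theory: one embeds the sequence $\{P_{mk} T(f) P_{mk}\}_m$ into a C*-algebra of bounded matrix sequences, quotients by the two-sided ideal generated by compact and \enquote{flipped-compact} sequences, and identifies stability of a sequence with the joint invertibility of the two strong limits (the operator and its reverse). For tridiagonal symbols one could alternatively attempt a direct proof by exploiting the three-term recurrence satisfied by the columns of $(T_{mk}(a)-\lambda I)^{-1}$ together with the spectral description in \cref{thm: GFBT}, but the C*-algebraic route is cleaner and handles the matrix-valued $k$-Toeplitz setting uniformly.
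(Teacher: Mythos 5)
Your outline is correct and is essentially the argument behind the result the paper imports: the paper gives no proof of this theorem at all, simply citing \cite[Corollary 6.16]{bottcher.silbermann1999Introduction}, and your sketch faithfully reproduces that finite-section argument — the truncation/commutator step for the easy inclusion, the asymptotic resolvent identity $\limsup_m \|(T_{mk}(a)-\lambda I)^{-1}\| = \max\{\|(T(a)-\lambda I)^{-1}\|, \|(T(\tilde a)-\lambda I)^{-1}\|\}$ as the substantive input, and the observation that $T(\tilde a)$ with $\tilde a(z)=a(1/z)^{\top}$ is (up to a unitary within-block flip) the transpose of $T(a)$, so the two resolvent norms coincide and the maximum collapses. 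Since you correctly flag the resolvent identity as the one step requiring the C*-algebra (or equivalent) machinery rather than claiming it follows from strong convergence, there is no gap relative to what the paper itself establishes.
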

In particular, the previous theorem implies that
$$
\lim_{\epsilon \to 0}\lim_{m\to\infty}\sigma_\epsilon (T_{mk}(a)) = \lim_{\epsilon \to 0}\sigma_\epsilon (T(a))   = \sigma (T(a)).
$$

Hence, by \cref{thm: GFBT}, the appropriate generalised Brillouin zone to recover the pseudospectral limit is the same as for the Toeplitz operator: 
\[
\mc B = \bigg\{\qp \mid \alpha\in [-\pi/\spatialperiod,\pi/\spatialperiod), \beta \in [0,\Delta/\spatialperiod] \bigg\}.
\] 
Notably, this Brillouin zone is of higher dimension than the previous two Brillouin zones (\emph{i.e.} a two dimensional region in $\C$ different from the previous $S^1\cong Y^*$) as it contains a range of possible decay rates. Furthermore, it contains the shifted Brillouin zone of the open boundary condition and the classical Brillouin zone of the periodic boundary condition as special cases ($\beta = \Delta/(2\spatialperiod)$ and $\beta=0$). This is in line with the fact that the Toeplitz operator spectrum contains both the Laurent operator spectrum, as well as the collapsed Toeplitz spectrum, as seen in \cref{fig:3lim}. The open and periodic boundary condition settings thus correspond to the maximal decay and zero decay extremes of a range of possible pseudospectral decays, captured by the Toeplitz operator spectrum.

\section*{Acknowledgements}
The authors would like to thank Erik Orvehed Hiltunen for insightful discussions. This work was supported by Swiss National Science Foundation grant number 200021--200307. 

\appendix

\section{Technical Results}\label{app:technical results}
\begin{proof}[Proof of \cref{lem:ellipseprops}]
    We denote $p=\Arg (\prod_{j=1}^k c_j), q = \Arg (\prod_{j=1}^k b_j)$ and find
    \begin{align*}
        \psi(z) =& (-1)^{k+1}\left((\prod_{j=1}^k c_j)z+(\prod_{j=1}^k b_j)z^{-1}\right)\\
        =&
        (-1)^{k+1}\left((\prod_{j=1}^k \abs{c_j})e^{\i p}z+(\prod_{j=1}^k \abs{b_j})e^{\i q}z^{-1}\right) \\
        =& (-1)^{k+1}\left((\prod_{j=1}^k \abs{c_j})e^{\i(p+q)/2}e^{-\i(q-p)/2}z+(\prod_{j=1}^k \abs{b_j})e^{\i(p+q)/2}e^{\i(q-p)/2}z^{-1}\right) 
        \\
        =& (-1)^{k+1}e^{\i(p+q)/2}\left((\prod_{j=1}^k \abs{c_j})(ze^{-\i(q-p)/2})+(\prod_{j=1}^k \abs{b_j})(ze^{-\i(q-p)/2})^{-1}\right) 
        \\
        =& K\Tilde{\psi}(ze^{-\i\zeta/2}),
    \end{align*}
    where $K\coloneqq (-1)^{k+1}\prod_{j=1}^k\sqrt{\frac{b_jc_j}{\abs{b_jc_j}}}$, $\zeta \coloneqq \Arg\left(\prod_{j=1}^k\frac{b_j}{c_j}\right)$ and
    \[
        \Tilde{\psi}(z) = \underbrace{(\prod_{j=1}^k \abs{c_j})}_{\eqqcolon A^-}z + \underbrace{(\prod_{j=1}^k \abs{b_j})}_{\eqqcolon A^+}z^{-1}.
    \]
    We can now study the ellipse traced out by $\Tilde{\psi}(z)$ which is unrotated and centred at $0$. If we respectively denote by $a,b$ its semi-major and semi-minor axis,  then we find that $A^\pm = \frac{a\pm b}{2}$. Note that $\Delta>0$ implies $b=A^+-A^->0$. 
    If we now allow $\beta \in [0,\Delta]$, we find the ellipse
    \[
        \Tilde{\psi}(e^{-\i(\alpha+\i\beta)}) = A^-e^\beta e^{-\i\alpha} + A^+e^{-\beta} e^{+\i\alpha}.
    \]
    The semi-minor axis of this ellipse is thus given by 
    \[
        b = \frac{A^+e^{-\beta} - A^-e^{\beta}}{2} = \frac{1}{2}\left((\prod_{j=1}^k \abs{b_j})^{1-\beta'}(\prod_{j=1}^k \abs{c_j})^{\beta'} - (\prod_{j=1}^k \abs{c_j})^{1-\beta'}(\prod_{j=1}^k \abs{b_j})^{\beta'}\right),
    \]
    where in the last equality we have factored out $\beta'\Delta = \beta$ and used $ \Delta = \ln \prod_{j=1}^k \abs{\frac{b_j}{c_j}}$. We can see that $b$ decreases from $(A^+-A^-)/2$ to $0$ as $\beta'$ increases from $0$ to $1/2$ and reaches $-(A^+-A^-)/2$ for $\beta'=1$. Thus, $\Tilde{\psi}(e^{-\i(\alpha+\i\beta)}), \alpha\in [-\pi,\pi]$  shrinks uniformly as $\beta$  increases from $0$ to $\Delta/2$ and so must $\psi$. Restricting $\beta$ to $[0,\Delta]$ also ensures that $\psi$ is well-defined, \emph{i.e.}, $\psi(e^{-\i(\alpha+\i\beta)})\in E$ and that $\psi$ is subjective. 
    We can also see that $b\mapsto -b$ if $\beta'\mapsto1-\beta'$ which implies $\Tilde{\psi}(e^{-\i(\alpha+\i\beta)}) = \Tilde{\psi}(e^{-\i(-\alpha+\i(\Delta-\beta))})$. Using this fact, we find that
    \begin{gather*}
        \psi(e^{-\i(\alpha+\i\beta)}) = K\Tilde{\psi}(e^{-\i(\alpha+\zeta/2+\i\beta)}) = K\Tilde{\psi}(e^{-\i(-\zeta/2-\alpha+\i(\Delta-\beta))}) \\= \psi(e^{-\i(-\zeta-\alpha+\i(\Delta-\beta))}).
    \end{gather*}
     Since $\psi(z)=\xi$ is a quadratic equation, it follows that
     $\psi^{-1}(\xi) = \bigr\{ \qp, (-\zeta - \alpha) + \i(\Delta - \beta) \bigr\}$. 
\end{proof}

\begin{lemma}\label{lem:tridiagfd}
    Let $A$ be a tridiagonal operator on $\ell(\C)$ or a matrix on $\C^{n\times n}$ with non-zero entries in the off-diagonals. Then its eigenspaces have at most dimension one.
\end{lemma}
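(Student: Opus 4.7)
The plan is to exploit the tridiagonal recursion directly: the eigenvalue equation $A\bm u = \lambda \bm u$ becomes a three-term recurrence relation in the entries of $\bm u$, and the non-vanishing of the off-diagonal entries lets us solve this recurrence uniquely from a single initial value. Consequently, the kernel of $A - \lambda I$ is parametrised by one scalar, hence has dimension at most one.

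Concretely, I would denote the diagonal entries by $a_j$, the superdiagonal by $b_j$ and the subdiagonal by $c_j$, with all $b_j,c_j \neq 0$. Writing out row $j=1$ of $(A-\lambda I)\bm u = 0$ gives $(a_1-\lambda)u_1 + b_1 u_2 = 0$, so, since $b_1\neq 0$, the entry $u_2$ is uniquely determined by $u_1$. For $j\geq 2$, row $j$ reads
\[
c_{j-1} u_{j-1} + (a_j-\lambda) u_j + b_j u_{j+1} = 0,
\]
and since $b_j\neq 0$ we may solve for $u_{j+1}$ in terms of $u_j$ and $u_{j-1}$. An induction on $j$ then shows that every $u_j$ is a fixed scalar multiple of $u_1$, so the eigenspace is spanned by at most one vector.

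In the finite case $A\in\C^{n\times n}$ the very last row provides an additional scalar condition that $u_1$ must satisfy in order for $\bm u$ to be an eigenvector; either this condition singles out a one-dimensional eigenspace or $\lambda$ is not an eigenvalue at all, but in no case can the eigenspace be of higher dimension. In the operator case on $\ell^2(\N,\C)$ one additionally needs the candidate sequence obtained from the recursion to be square-summable for $\lambda$ to be an eigenvalue, but since every solution of the recursion is already a scalar multiple of the one determined by $u_1$, the same one-dimensional bound on the eigenspace applies. No step presents a real obstacle; the only mildly delicate point is just making sure the base case handles the first-row equation correctly before the generic three-term recurrence kicks in.
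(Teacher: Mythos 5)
Your proposal is correct and follows essentially the same route as the paper: the eigenvalue equation is a three-term recurrence that, because the off-diagonal entries are non-zero, determines every entry of $\bm u$ uniquely from $u_1$, so the eigenspace is at most one-dimensional. Your extra remarks on the last-row condition in the finite case and square-summability in the operator case are sound but not needed for the dimension bound.
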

\begin{proof}
We consider the following tridiagonal operator on the (potentially unbounded) sequence space $\ell(\C)$:
    \begin{align*}
    A = \begin{pmatrix}
    a_1 & b_1 &0  &  \cdots  \\
            c_1    & a_2 & b_2 & \\
            0      & c_2 & \ddots & \ddots \\
            \vdots    & & \ddots & \ddots
    \end{pmatrix}.
    \end{align*}
    Suppose that there exists an eigenvalue $\lambda\in \sigma(A)$ with corresponding eigenvector $\bm u = (u_1,\dots)\in \ell(C)$. Then, the entries of $\bm u$ satisfy the following set of equations: $u_2 = -(a_1-\lambda)u_1/b_2$ and $u_{i+1} = -(c_{i-1}u_{i-1}+(a_i-\lambda)u_i)/b_{i+1}$ for $i=2,\dots$. This recursive relation thus uniquely determines $\bm u$ from $u_1$. 
    Let now $\bm u'$ be another eigenvector to $\lambda$. Because the recursive relation is linear, picking $c\coloneqq u_1'/u_1$ yields $\bm u' = c\bm u$ and $\bm u', \bm u$ must be linearly dependent. Because $\bm u'$ was arbitrary, this proves that the eigenspace can have at most dimension one. 

    This argument functions analogously for the matrix case.
\end{proof}

\printbibliography

\end{document}